\newtheorem{lemma}{Lemma}
\newtheorem{corollary}{Corollary}
\newtheoremstyle{named}{}{}{\itshape}{}{\bfseries}{}{.5em}{#1\thmnote{ (#3) }}
\theoremstyle{named}
\theoremstyle{named}
\theoremstyle{named}
\newcommand{\bra}[1]{\left\langle{#1}\right|}
\newcommand{\ket}[1]{\left|{#1}\right\rangle}
\begin{document}

\title{Quantum and super-quantum enhancements to two-sender, two-receiver channels}

\author{Yihui Quek}
\author{Peter W. Shor}
\email{shor@math.mit.edu}

\affiliation{Massachusetts Institute of Technology, Departments of Physics and Mathematics}

\begin{abstract}
\normalsize
We study the consequences of `super-quantum non-local correlations' as represented by the PR-box model of Popescu and R\"{o}hrlich, and show PR-boxes can enhance the capacity of  noisy interference channels between two senders and two receivers. PR-box correlations violate Bell/CHSH inequalities and are thus stronger -- more non-local -- than quantum mechanics; yet weak enough to respect special relativity in prohibiting faster-than-light communication. Understanding their power will yield insight into the non-locality of quantum mechanics. We exhibit two proof-of-concept channels: first, we show a channel between two sender-receiver pairs where the senders are not allowed to communicate, for which a shared super-quantum bit (a PR-box) allows perfect communication. This feat is not achievable with the best classical (senders share no resources) or quantum entanglement-assisted (senders share entanglement) strategies. Second, we demonstrate a class of channels for which a tunable parameter $\epsilon$ achieves a {\em double} separation of capacities; for some range of $\epsilon$, the super-quantum assisted strategy does better than the entanglement-assisted strategy, which in turn does better than the classical one. 
\end{abstract}

 \date{\today}

\maketitle

\section{Introduction}

Bell's influential paper in 1964 \cite{Bell} brought to light the existence of correlations that can be obtained from bipartite measurements of a quantum state, that cannot be reproduced by a local theory. Quantum mechanics is a \textit{non-local theory} because it is able to predict such correlations, whereas a local theory with spatially separated observers could never do so. Such a local theory would \textit{prohibit} physical measurements (of, say, particle A's spin) in one place from affecting the measurement outcomes of another experimenter (who measures, say, particle B's spin) who is spacelike-separated from the first one, if there is no field between them. Whereas in a non-local theory, to borrow an analogy from Popescu \cite{Popescu}, `moving something here, something else instantaneously wiggles there'. 

Research into this area (see \cite{brunner} for a review) has been motivated by the desire to understand \textit{how} the non-locality of quantum theory gives rise to the advantages of information processing with quantum resources. One of the main results in this research is the famous inequality of Clause, Horne, Shimony and Holt \cite{CHSH}, which bounds the statistics of spatially-separated measurements by two experimenters on a physical state in local hidden-variable (LHV) models. They define a quantity
\begin{equation}\label{S}
S := |\langle A_0 B_0 \rangle+\langle A_0 B_1 \rangle+\langle A_1 B_0 \rangle - \langle A_1 B_1 \rangle | 
\end{equation}
where $A_0$ and $A_1$ are local measurement operators corresponding to spin up and spin down on experimenter A's spin-half particle, and $B_0$ and $B_1$ the analogous measurement operators for Bob, and $\langle . \rangle$ denotes expectation value, and show that for LHV models, 
\begin{equation}\label{CHSH}
S_\text{LHV} \leq 2 
\end{equation}
Since LHV theories must obey the inequality \eqref{CHSH}, while quantum theories, which are non-local, need not, the quantity in \eqref{CHSH} has become a popular metric of the non-locality of a given theory and is sometimes referred to as the `CHSH value'. \textit{Quantum mechanics}(QM), as a non-local theory, is exempt from this bound. Measurements on an entangled state, such as the state $\frac{|0\rangle_A|1\rangle_B - |1\rangle_A|0\rangle_B }{\sqrt{2}}$, can satisfy
\begin{equation} \label{QM-CHSH}
S_\text{QM} \leq 2\sqrt{2}. 
\end{equation}
Tsirelson\cite{Tsirelson} proved that with QM, $2\sqrt{2}$ is the maximal achievable violation of this inequality. But QM must also respect the causality/non-signaling property of special relativity, which prohibits information transfer at a speed faster than light. In fact, out of all our physical theories that are currently in use, QM is special in being non-local and \textit{yet} satisfying the non-signaling constraint: two spacelike-separated observers may influence each other \textit{(non-locality)}, and yet, cannot communicate with each other -- the above-mentioned `influence' must not allow for information transfer \textit{(relativistic causality)}. 

But note that even $S_\text{QM}$ falls short of its algebraic maximum (see Equation \eqref{S}), which is 4. In 1994, Popescu and Rohrlich\cite{PR}, asking `Why isn't quantum theory {\em more} non-local?', proved that it is possible to construct causality-satisfying models that are more non-local than QM. To unify these theories, they proposed an abstraction to represent the probability distribution that they induce on measurement outcomes: a \textit{non-local box}, visualized in figure \ref{PRbox}. This is a bipartite correlated box with two ends, one of which is held by Alice and the other by Bob. Alice inputs $x$ (respectively Bob inputs $y$) and the box outputs $a$ (respectively $b$) according to the probability distribution $P(a,b|x,y)$ (where $x, y, a, b \in \{0,1\}$): 

\begin{equation}\label{PRdist}
P^{PR}(a,b|x,y) \left\{
\begin{array}{ll}
      1/2 & \text{if } a \oplus b=xy \\
      0 & \text{otherwise}  \\
\end{array} 
\right. \end{equation}

\begin{figure}
\centering
\includegraphics[scale=0.7]{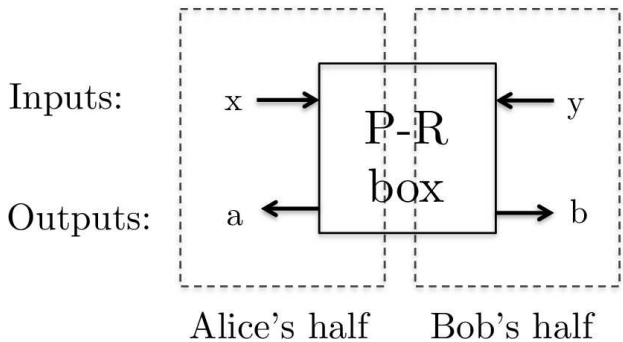}
\caption{\label{PRbox}A PR, or non-local box, whose inputs and outputs are governed by the distribution in Equation \eqref{PRdist}.}
\end{figure}

To calculate the CHSH value of the PR box, we now interpret $A_0, A_1$ (respectively $B_0, B_1$) from Equation \eqref{S} as the expected value of the box's output when Alice (respectively Bob) puts in 0, 1 into her end of the PR-box. This information-theoretic formulation of Alice and Bob's interaction with the theory is completely analogous with our previous language of measurements when construed within the measurement-operator formalism: in making measurements of a two-level system, Alice and Bob apply a set of measurement operators $\{\Pi_0, \Pi_1\}$ corresponding to the two possible outcomes, which correspond exactly to the set of inputs $\{0,1\}$ of both experimenters to the PR-box. 

Thus, with such a PR box, we achieve the following super-quantum correlations:
\begin{equation}
S_\text{QM} = 4, \qquad 
\end{equation}
and we call a theory that predicts the correlations of PR-boxes a \textit{super-quantum} theory. This is one that produces even stronger nonlocal correlations than quantum theory. 

An important implication of this is that \textbf{if they share a PR-box, Alice and Bob could always win the CHSH game}. This is because the condition for outputs $a,b$ to be produced by the PR box is exactly the winning condition of the CHSH game. On the other hand, if Alice and Bob share an entangled pair, they could win the CHSH game with a probability of at most $\cos^2(\frac{\pi}{8})$. This illustrates the super-quantum nature of the PR-box. 

We summarize the theories under consideration in terms of their locality properties (as measured by their CHSH value) in Figure \ref{hierarchy}. We also refer the reader to \cite{PL} for a comprehensive review of PR-boxes and non-local correlations.

\begin{figure}
\centering
\includegraphics[scale=0.4]{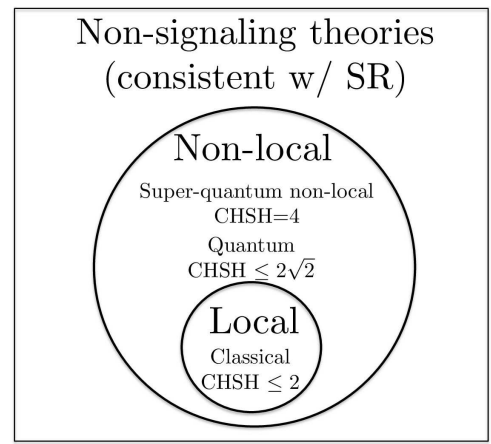}
\caption{\label{hierarchy} Types of theories grouped by their locality properties (they must all not permit space-like separated observers to communicate and hence all fall under the banner of non-signaling)}
\end{figure}

A major push of quantum information research has been to devise strategies that utilize quantum properties, such as entanglement, to aid communication tasks -- quantum key distribution, quantum bit commitment and so on. This begs the question: could we use the maximally non-local correlations of \textit{super-quantum} theories as a resource, and what tasks would they facilitate? 

Previously, PR-boxes had been shown to allow Alice and Bob to perform any two-party distributed computation by transmitting only a single bit of information \cite{vandam}, as well as the cryptographic primitives of unconditionally-secure bit-commitment and oblivious transfer \cite{nonlocalcrypto}. This paper is the first survey of how super-quantum assistance could enhance communication over an interference channel. 

It is organized as follows: In Section \ref{sec: 3}, we introduce notation for the two-sender, two-receiver interference channel, as well as the information quantity we optimize. In Section \ref{sec: 4}, we present our original result of a two-sender, two-receiver interference channel over which communication is more efficient with the aid of a PR-box, than with entanglement and/or a classical strategy. In Section \ref{sec: 5}, we present a variant of the above; a class of erasure channels characterized by a tunable parameter $\epsilon$, whose capacities show a strict separation given these three classes of resources (classical, quantum-assisted and PR-box assisted). We finally conclude with a summary of results and suggestions for future research in Section \ref{sec: 6}.

\section{Notation} \label{sec: 3}
In the following sections, we will exhibit several two-sender, two-receiver channels that demonstrate capacity separations. We use the Shannon model of channel communication \cite{Shannon1} to describe these channels, for which we follow the notation of \cite{NetworkIT} (in turn based on \cite{sato}). The basic model of a two sender-receiver pair channel is depicted in figure \ref{interference}.

\begin{figure*}[!htbp]
\includegraphics[scale=0.5]{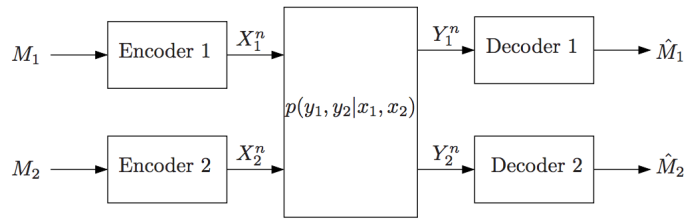}
\caption{\label{interference} General model of a two sender-receiver pair communication system. Figure taken from \cite{NetworkIT}.}
\end{figure*}
Such a channel is denoted $(\mathcal X_1 \times \mathcal X_2, p(y_1,y_2|x_1,x_2), \mathcal Y_1 \times \mathcal Y_2)$. A  $(2^{nR_1},2^{nR_2},n)$ code for this channel consists of:
\begin{itemize}
\item Two message sets $[1:2^{nR_1}]$ and $[1:2^{nR_2}]$
\item Two encoders, where encoder 1 assigns a codeword $x_1^n(m_1)$ to each message $m_1 \in [1:2^{nR_1}]$ (respectively encoder 2 assigns $x_2^n(m_2)$ for $m_2 \in [1:2^{nR_2}]$).
\item Two decoders, where decoder 1 uses a {\em decoding rule} to assign an estimate $\hat{m_1}$ or an error message $e$ to each received sequence $y_1^n$, and decoder 2 does the same (ie. assigns $\hat{m_2}$ or $e$).
\end{itemize}
A rate pair $(R_1,R_2)$ is said to be achievable for this channel if there exist a sequence of $(2^{nR_1},2^{nR_2},n)$ codes such that $\underset{n\rightarrow \infty}{\lim} [ P_e^{(n)} \equiv P\{(\hat{M_1},\hat{M_2}) \neq (M_1,M_2)\}] = 0.$ For our channels, we will be concerned with their sum-capacity $C_\text{sum}$ for classical information, which is the maximum, over all coding strategies, of the sum of the rates for each sender-receiver pair. That is, 
\begin{equation} \label{sumcap}
C_\text{sum} = \max_\text{coding strategy} \left( R_1+R_2 \right) 
\end{equation}
Note that $C_\text{sum} \neq \underset{p(x_1)}{\max} R_1+ \underset{p(x_2)}{\max} R_2$ in general, because the rates must be attainable \textit{simultaneously}. Whenever we speak of the `capacity' of a channel, we shall refer to this information capacity.

\section{Channel I} \label{sec: 4}

In 2005, Cerf, Gisin, Massar and Popescu demonstrated a sense in which super-quantum non-locality encompasses quantum non-locality -- they showed that a PR box could simulate the correlations obtained from any bipartite measurement of a maximally entangled pair of qubits without communication\cite{CGMP}. The reverse direction of simulation is impossible because PR-box correlations are more non-local than entanglement. Therefore, one expects that any communication task which is made more efficient with the aid of entanglement, could potentially benefit {\em even more} from PR-boxes. Table \ref{1} represents a channel that demonstrates just such a non-locality separation. In what follows, `classical' strategies are those where senders are allowed to share no communication but may discuss a strategy before-hand, and `entanglement-assisted' (alternatively, `quantum-assisted)  strategies imply strategies where the senders are allowed to share $2\times n$ quantum entanglement -- that is, a bipartite quantum state where each half is an $n$-level system represented as a $n$-dimensional Hilbert space. 

\begin{table}[!htbp]
\centering
  \begin{tabular}{ |c || c | c | c | c ||}
  \hline
 $X_1$\textbackslash $X_2$  & \textbf{00} & \textbf{01} & \textbf{10} & \textbf{11} \\ \hline \hline
\textbf{00} & 00   &  11 & 01  & 10 \\ \hline
\textbf{01} & 11  & 00  & 10 & 01 \\ \hline
\textbf{10} & 10 &  01 & 00 & 11  \\ \hline
\textbf{11} & 01  &  10 & 11  & 00 \\ \hline
\hline
  \end{tabular}
\caption{Channel I: The senders each send two-bit codewords, $X_1$ and $X_2$ (codeword choices are in bold, on the axes), and the two-bit entries in the table ($Y_1Y_2$) correspond to the channel outputs; one bit goes to each receiver. Thus, if Sender 1 sends $01$ and Sender 2 sends $10$, Receiver 1 gets $1$ and Receiver 2 gets $0$. \label{1}}
  \end{table}
  
The notation for this channel (which we shall call Channel I) is as follows: the senders and receivers shall be denoted by $A_i$ and $B_i$; the bits they handle shall be denoted $X_i$ (2 bit message that $A_i$ inputs to the channel) and $Y_i$ (1 bit message that $B_i$ receives from the channel). To prevent confusion, we will try not to use $A/B$ simultaneously with $X/Y$, unless it is necessary to make such a distinction. 

On each use of Channel I, the senders send \textbf{two bits} out of the alphabet $\{00,01,10,11\}$ and the channel outputs \textbf{one bit} to each receiver. Table \ref{1} shows the output pairs that correspond to each input pair.  

By definition, the maximum possible sum-capacity of Channel I (over all classes of resources) is 2: the two receivers each receive one bit. In fact, $C_\text{sum}=2$ only if there exists a strategy where the receiver always decodes the sender's bit perfectly. In fact, it will turn out we fall far short of this maximum if the senders are restricted to using a purely classical probabilistic strategy; in that case the capacity is 1. We now show that this channel demonstrates the capacity separations
\[\textbf{C}_\textbf{classical}, \,\, \textbf{C}_\textbf{quantum} <\textbf{C}_\textbf{super-quantum}.\]

\subsection{Capacity of Channel I with a classical strategy}
Let us build up our intuition about Channel I to understand why the classical strategy capacity should be so small. Channel I takes {\em two}-bit inputs but outputs only one bit to each receiver, so if the senders can ultimately communicate only one bit, the second bit seems redundant. Might the redundancy improve communication? We could note the following:
\begin{itemize}
\item Consider a strategy where each sender sends codewords according to a uniform probability distribution over the entire input alphabet, for both senders. Taking the marginal probability distribution for the first pair (over the second pair) results in the binary symmetric channel of Table \ref{unif}. It is the same for the other pair. This channel has a bit-flip probability $p=0.5$. Since the capacity of the binary symmetric channel is $1-H(p)$, the best possible joint rate with this strategy is 0. 
\begin{table}[!htbp]
  \begin{tabular}{ c || c | c ||}
 $X_1$\textbackslash $Y_1$  & \textbf{0} & \textbf{1} \\ \hline \hline
\textbf{00} & $\Pr = 0.5$  & $\Pr = 0.5$ \\\hline
\textbf{01} & $0.5$ & $0.5$  \\ \hline
\textbf{10} & $0.5$ & $0.5$  \\ \hline
\textbf{11} & $0.5$ & $0.5$  \\ \hline
\hline
  \end{tabular} \quad
\caption{\label{unif} A uniform probability distribution results in a perfectly randomizing channel, evident from taking the marginal probability distributions for one sender-receiver pair (in this case, the first).}
  \end{table}
\item The following coding strategy gives a joint rate of 1, and therefore 1 is an inner bound on the sum capacity: $A_2$ always sends $00$ while $A_1$ encodes message bit 0 as $00$ and message bit 1 as $01$; then $B_1$ receives exactly the bit that $A_1$ intended to send. So the first sender-pair always communicates perfectly at the expense of the second pair. 
\end{itemize}
The reader should persuade herself that other simple strategies such as reducing the size of either sender's alphabet will not achieve perfect coding either. In fact, as  Lemma \ref{joint} shows, it is not even possible to do {\em better} than $R_1+R_2 = 1$. 

\begin{lemma}[Classical capacity of Channel I] \label{joint}
If the senders are limited to a classical (at most probabilistic) strategy with no aid from communication, entanglement or PR boxes, on the given channel the sum-capacity is strictly outer-bounded:
\begin{equation}\label{1.5}
R_1 + R_2 < 2.
\end{equation}
In fact, we may show computationally that 
\begin{equation} \label{1.7}
C_\text{classical} = 1. 
\end{equation}
\end{lemma}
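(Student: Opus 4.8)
The plan is to establish the matching upper bound $C_\text{sum}\le 1$; together with the rate-$1$ scheme already exhibited (sender~$2$ silent, sender~$1$ transmitting one bit) this yields $C_\text{classical}=1$, and in particular the strict bound \eqref{1.5}. The starting point is that Channel~I is \emph{deterministic}: writing each sender's two-bit symbol as $X_1=a_1a_2$ and $X_2=b_1b_2$, one reads off from Table~\ref{1} that
\begin{equation}
\label{eq:chI}
Y_1=a_1\overline{b_1}\oplus a_2\oplus b_2, \qquad Y_2=\overline{a_1}\,b_1\oplus a_2\oplus b_2 ,
\end{equation}
so that $Y_1\oplus Y_2=a_1\oplus b_1$ and, bit by bit, $Y_1\oplus a_1=Y_2\oplus b_1=U$ where $U:=a_1b_1\oplus a_2\oplus b_2$.

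I would prove the strict bound \eqref{1.5} by contradiction. If $R_1+R_2=2$ then, since trivially $R_i\le\log|\mathcal Y_i|=1$, both rates approach $1$, and Fano's inequality at the two decoders (with $H(Y_i^n)\le n$) forces $H(Y_1^n\mid M_1)=o(n)$, $H(Y_2^n\mid M_2)=o(n)$ and $H(Y_i^n)=n-o(n)$ along a rate-$(1,1)$ sequence of codes (private randomness at the encoders can be fixed without loss). Writing $a_1^n=a_1^n(M_1)$, $b_1^n=b_1^n(M_2)$ for the first-bit codeword sequences and $U^n$ for the coordinate-wise residue of \eqref{eq:chI}, submodularity of entropy applied to $\{U^n,M_1\}$ and $\{U^n,M_2\}$ (using that $U^n$ is a function of $(M_1,M_2)$ and $M_1\perp M_2$) gives $H(U^n)\le H(U^n\mid M_1)+H(U^n\mid M_2)=H(Y_1^n\mid M_1)+H(Y_2^n\mid M_2)=o(n)$: the interference residue $U^n$ is asymptotically deterministic. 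Since $Y_1^n=U^n\oplus a_1^n$ coordinate-wise, this forces $H(a_1^n)=n-o(n)$, and likewise $H(b_1^n)=n-o(n)$, so $a_1^n$ and $b_1^n$ are asymptotically uniform and (being functions of independent messages) independent. A short further step shows $a_2^n$ and $b_2^n$ are asymptotically deterministic as well — e.g.\ $H(U^n\mid M_2)=o(n)$ bounds $H\big((a_{2,i})_{i:\,b_{1,i}(M_2)=0}\big)$ for typical $M_2$, and choosing two values of $M_2$ whose first-bit sequences are complementary recovers all of $a_2^n$. Then \eqref{eq:chI} gives $H(a_1^n\wedge b_1^n)\le H(U^n)+H(a_2^n)+H(b_2^n)=o(n)$, whereas $H(a_1^n\wedge b_1^n)\ge H\big((b_{1,i})_{i:\,a_{1,i}=1}\big)=n/2-o(n)$ by the near-uniformity and independence of $a_1^n,b_1^n$. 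The contradiction proves $R_1+R_2<2$.

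For the exact value \eqref{1.7} — i.e.\ the matching upper bound $C_\text{sum}\le 1$ — I would proceed computationally, as the phrasing of the Lemma anticipates: one checks, by a finite optimization over single-letter input distributions together with the corresponding interference-channel outer bounds (cf.\ \cite{NetworkIT}), that the sum-rate cannot exceed $1$, the value being already attained by the one-pair-at-a-time input; with the lower bound this gives $C_\text{classical}=1$. I expect this to be the genuine obstacle to a fully closed-form proof: there is no simple single-letter capacity formula for a general deterministic interference channel, and the relevant objective is not jointly concave in the two input distributions, so the sharp value is certified numerically/exhaustively over a low-dimensional simplex rather than read off — whereas the strict gap from the algebraic maximum $2$ does follow from the closed-form argument above. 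A secondary point to dispatch is that the converse and that argument cover all ``classical'' strategies: pre-agreed randomness is just a time-sharing variable and private randomness at either encoder can be fixed without loss, and the contradiction argument uses only the deterministic, memoryless form \eqref{eq:chI}.
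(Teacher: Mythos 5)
Your approach to the strict bound \eqref{1.5} is genuinely different from the paper's, and in one respect more ambitious: the paper (in the main text and Appendix~\ref{app0}) argues at the single-letter level, showing that for a product input distribution $I(X_1{:}Y_1)=1$ forces $I(X_2{:}Y_2)=0$ via a case analysis of the alphabets, and then relies on the numerical Algorithm~\ref{MGD} to cover the remaining single-letter tradeoff. You instead work directly with block codes: you isolate the deterministic structure by finding the explicit Boolean form of the channel (which I have checked against Table~\ref{1} and is correct, as is the residue identity $Y_1\oplus a_1=Y_2\oplus b_1=U=a_1 b_1\oplus a_2\oplus b_2$), invoke Fano to get $H(Y_i^n\mid M_i)=o(n)$, and use the submodularity bound $H(U^n)\le H(U^n\mid M_1)+H(U^n\mid M_2)$ (valid here because $U^n$ is a deterministic function of $(M_1,M_2)$ and the messages are independent) to kill the residue. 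That gives $H(a_1^n),H(b_1^n)=n-o(n)$ with $a_1^n\perp b_1^n$, which is a nice place to land. Your handling of the exact value \eqref{1.7} is the same computational route the paper takes, and your remarks about fixing private randomness and pre-shared randomness being time-sharing are fine.

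However, there is a genuine gap in the step that claims $H(a_2^n),H(b_2^n)=o(n)$. The argument ``$H(U^n\mid M_2)=o(n)$ bounds $H\bigl((a_{2,i})_{i:\,b_{1,i}(M_2)=0}\bigr)$ for typical $M_2$, and choosing two values of $M_2$ with complementary first-bit sequences recovers all of $a_2^n$'' has two holes: (i) nothing guarantees that two messages with complementary $b_1^n$-codewords exist, and (ii) even if they did, $H(U^n\mid M_2)=o(n)$ is an average bound and does not control $H(U^n\mid M_2=m)$ for those two specific $m$. If one carries out the averaging honestly, one only gets something like $H(a_2^n)\le n/2+o(n)$, which is too weak for your final inequality $H(a_1^n\wedge b_1^n)\le H(U^n)+H(a_2^n)+H(b_2^n)$ to yield a contradiction. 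What you \emph{can} extract rigorously is the conditional version: since $H(U^n\mid a_1^n,b_1^n)\le H(U^n)=o(n)$ and, given $(a_1^n,b_1^n)$, $U^n$ is the XOR of the conditionally independent $a_2^n$ and $b_2^n$, one gets $H(a_2^n\mid a_1^n)=o(n)$ and $H(b_2^n\mid b_1^n)=o(n)$ — i.e.\ $a_2^n\approx f(a_1^n)$ and $b_2^n\approx g(b_1^n)$ for some functions $f,g$. That is not the same as being deterministic, and your closing step does not go through with only the conditional statement. The fix is to replace the final step with an argument that does not require unconditional determinism of $a_2^n,b_2^n$: for example, note that the combination $h(a,b):=(a\wedge b)\oplus f(a)\oplus g(b)$ satisfies $h(a,b)\oplus h(a',b)\oplus h(a,b')\oplus h(a',b')=(a\oplus a')\wedge(b\oplus b')$ identically (the $f,g$ terms cancel), so if $U^n\approx h(a_1^n,b_1^n)$ were nearly constant while $a_1^n,b_1^n$ are nearly uniform and independent, then $(a\oplus a')\wedge(b\oplus b')$ would be $0^n$ with high probability for independent copies, which fails with probability $1-(3/4)^n$. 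That closes the gap and actually delivers a cleaner and fully block-coding converse for \eqref{1.5} than the paper's sketch; but as written your proposal does not reach it.
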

\begin{proof}\label{proof} Here we sketch a proof for Equation \eqref{1.5}. We show that $R_1 := I(X_1:Y_1) =1$ implies $R_2:= I(X_2:Y_2) < 1$. 

Suppose $I(X_1:Y_1) =1$. Using the chain rule for mutual information shows that $I(X_2:Y_1|X_1) =0$.
\begin{equation}
I(X_1:Y_1) = 1 =  \underbracket{I(X_1,X_2:Y_1)}_{\text{takes on maximal value, 1}}
 - \underbracket{I(X_2:Y_1|X_1)}_{=0}
\end{equation}
Information-theoretically, the condition $I(X_2:Y_1|X_1) =0$ means that the first receiver's bit, $Y_1$, cannot possibly distinguish between the possibilities for the second sender's 2-bit message, $X_2$, for every choice of $X_1$ -- and this is a restriction on what the the second sender's alphabet set could be. Consider the first row of Table \ref{1}. The restriction says that \textbf{if Sender 1 sends $00$ on a particular channel use, then there are only two possible non-trivial choices for Sender 2's alphabet}: a uniform probability distribution over $\{00,10\}$ (both resulting in the output $Y_1 = 0$), OR a uniform probability distribution over $\{01,11\}$ (both resulting in the output $Y_1 = 1$). 

We similarly analyze the cases when Sender 1 sends $01,10$ or $11$. The conclusion is that one of the following must hold (otherwise the restriction is never met):
\begin{enumerate}
\item Sender 1's alphabet is some subset of $\{00,01\}$; Sender 2's alphabet is either $\{00,10\}$ or $\{01,11\}$.
\item Sender 1's alphabet is some subset of $\{10,11\}$; Sender 2's alphabet is either $\{00,11\}$ or $\{10,01\}$.
\end{enumerate}
Since the two senders are not allowed to communicate during the sending of the messages, they must choose an alphabet at the start and stick to it. Consequently, only one of these four cases can hold, and bearing in mind the other restriction that our coding strategy must fulfil the condition $I(X_1:Y_1) = 1$, we may show that $R_2<1$ for all of them. For an example of this analysis, refer to Appendix \ref{app0}. \end{proof}

But it is still possible that if one of the sender-receiver pairs is willing to accept a sub-optimal (less than 1) rate, the other pair could attain a high rate such that $R_1+R_2>1$. To show that this never happens, we ran an algorithm based on modified gradient descent. This algorithm is given in pseudocode here (Algorithm \ref{MGD}). The inputs to the algorithm are two vectors $\vec{x_1} := (a_1,b_1,c_1,d_1), \vec{x_2} := (a_2,b_2,c_2,d_2)$, such that the square of the entries in the first vector $\{a_1^2,b_1^2,c_1^2,d_1^2\}$ represents the probabilities of Sender 1 sending $\{00,01,10,11\}$ respectively, and correspondingly $\{a_2^2,b_2^2,c_2^2,d_2^2\}$ for Sender 2. The modification to the usual gradient descent algorithm was to respect the constraints \[a_1^2+b_1^2+c_1^2+d_1^2 =1 \,\,; \,\, a_2^2+b_2^2+c_2^2+d_2^2 =1 .\] To do this, we treated the problem of simultaneous gradient descent where the component vectors had to lie on two 4-D unit spheres. After running gradient descent $10000$ times with a $tol$ set to $1e-6$ and never observing a value of the joint rate above $1$, we concluded that the joint rate is, indeed, upper bounded by 1. Equation \eqref{1.7} follows.

\begin{algorithm}{Modified-Gradient-Descent($\vec{x}$)} 
\caption{\label{MGD}Finds the maximum value of the function $I(X_1:Y_1) + I(X_2:Y_2)$ over all input distributions}
\begin{algorithmic}
\State {$f(\vec{x_1},\vec{x_2}) := -I(X_1;Y_1) - I(X_2;Y_2)$}  (Objective function)
\State {$\vec{g_1} := \vec{\nabla}_{x_1} f$ ; $\vec{g_2} := \vec{\nabla}_{x_2} f$ }  
\State {Initialize $x_1, x_2, tol, maxiter$}
	\WHILE {$iter<maxiter \, \, \text{and} \,\, dx>tol$}
		\STATE {Evaluate $\vec{g_1}(\vec{x_1},\vec{x_2})$ ; $\vec{g_2}(\vec{x_1},\vec{x_2})$}
		\STATE {$\vec{h_1} \leftarrow \vec{g_1}-(\vec{g_1}\cdot \vec{x_1}) \vec{x_1}$ ; $\vec{h_2} \leftarrow \vec{g_2}-(\vec{g_2}\cdot \vec{x_2}) \vec{x_2}$ }
		\STATE{$\alpha_1 \leftarrow \frac{h_1^2}{h_1^2+h_2^2}$ ; $\alpha_2 \leftarrow \frac{h_2^2}{h_1^2+h_2^2}$  }
		\STATE {$\vec{n_1} \leftarrow \frac{\vec{h_1}}{|\vec{h_1}|}$ ; $\vec{n_2} \leftarrow \frac{\vec{h_2}}{|\vec{h_2}|}$ }
		\STATE {$\phi^{'} \leftarrow \arg \underset{\phi}{\min} \,\,f(\cos(\alpha_1 \phi)\vec{x_1} + \sin(\alpha_1 \phi) \vec{n_1}, \cos(\alpha_2 \phi)\vec{x_2} + \sin(\alpha_2 \phi) \vec{n_2})$}
		\STATE {$\vec{x_1^{'}} \leftarrow \cos(\alpha_1 \phi^{'})\vec{x_1} + \sin(\alpha_1 \phi^{'}) \vec{n_1}$; $\vec{x_2^{'}} \leftarrow \cos(\alpha_2 \phi^{'})\vec{x_2} + \sin(\alpha_2 \phi^{'}) \vec{n_2}$} 
		\STATE {$dx \leftarrow \sqrt{x_1^{'2}-x_1^2 + x_2^{'2}-x_2^2}$}
		\STATE {$iter += 1$}
	\ENDWHILE
\end{algorithmic}
\end{algorithm}

\subsection{Capacity of Channel I with super-quantum assistance}

We introduce the notion of super-quantum assisted capacity with a thought experiment: supposing that the two senders may coordinate their input alphabets in real-time, perhaps by using a non-classical resource. If we want both pairs to communicate perfectly, that is $I(X_1:Y_1) =1$ AND $I(X_2:Y_2) =1$, this imposes 4 conditions on the actual encodings that go into the channel:
\begin{enumerate}
\item If $X_1\in \{00,01\}$, either $X_2 \in \{00,10\}$ or $X_2 \in \{01,11\}$.
\item If $X_1 \in \{10,11\}$, either $X_2 \in \{01,10\}$ or $X_2 \in \{00,11\}$. 
\item If $X_2\in \{00,01\}$, either $X_1 \in \{00,10\}$ or $X_1 \in \{01,11\}$.
\item If $X_2 \in \{10,11\}$, either $X_1 \in \{01,10\}$ or $X_1 \in \{00,11\}$. 
\end{enumerate}
That, is, only the shaded outputs in either the left or the right subtables of Table \ref{allowed} could be produced. Obviously, this is not a set that can be produced with only classical resources. Lemma \ref{superquant} states that it is possible with a PR-box.

\begin{table}
\begin{tabular}{ll}
  \begin{tabular}{ |c || c | c | c | c ||} \hline
 $X_1$\textbackslash $X_2$  & \textbf{00} & \textbf{01} & \textbf{10} & \textbf{11} \\ \hline \hline
 \textbf{00} & 00\cellcolor{lightgray} & 11 & 01\cellcolor{lightgray}& 10 \nonumber \\ \hline
 \textbf{01} & 11 & 00\cellcolor{lightgray} & 10 & 01\cellcolor{lightgray} \nonumber \\ \hline
 \textbf{10} & 10\cellcolor{lightgray} & 01 & 00 & 11\cellcolor{lightgray} \nonumber \\ \hline
 \textbf{11} & 01 & 10\cellcolor{lightgray} & 11\cellcolor{lightgray}& 00 \nonumber \\ \hline
  \end{tabular} 
&
    \begin{tabular}{ |c || c | c | c | c ||}\hline
 $X_1$\textbackslash $X_2$  & \textbf{00} & \textbf{01} & \textbf{10} & \textbf{11} \\ \hline \hline
 \textbf{01} & 00& 11\cellcolor{lightgray} &01 & 10\cellcolor{lightgray} \nonumber \\ \hline
 \textbf{00} & 11\cellcolor{lightgray} & 00 & 10\cellcolor{lightgray}& 01 \nonumber \\ \hline
 \textbf{11} & 10 & 01 \cellcolor{lightgray} & 00\cellcolor{lightgray}& 11 \nonumber \\ \hline
  \textbf{10} &01 \cellcolor{lightgray} & 10 & 11 & 00\cellcolor{lightgray} \nonumber \\ \hline
  \end{tabular}
\end{tabular}
 \caption{\label{allowed} Hypothetically, the demand that perfect coding happen requires that only the shaded outputs be produced by the channel. Only these two coding strategies will allow both $I(X_1:Y_1) = 1$ and $I(X_2:Y_2) = 1$. Returning back to the classical realm, since the senders cannot communicate with each other, they cannot coordinate their inputs so as to only produce the shaded outputs, so perfect coding is not possible classically. But if they share a PR-box, they can. Our super-quantum strategy achieves exactly the left-hand-side set of outputs.}
\end{table}

\begin{lemma}[Capacity of Channel I with super-quantum resources]\label{superquant}
If the senders are allowed to share a PR-box, the capacity of the given channel is exactly 2. This is the algebraically maximal sum-capacity of the channel. 
\end{lemma}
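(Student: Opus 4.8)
The plan is to exhibit an explicit PR-box-assisted code that delivers $\hat m_1 = m_1$ and $\hat m_2 = m_2$ with zero error, so that the rate pair $R_1 = R_2 = 1$ is achievable; combined with the trivial converse already noted (each receiver observes a single bit per channel use, so $R_1 + R_2 \le 2$ under any resource), this forces $C_\text{sum} = 2$.

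First I would put the channel into algebraic normal form. Writing the senders' codewords as bit pairs $X_1 = (a,b)$ and $X_2 = (c,d)$ (most significant bit first), a direct check against all sixteen entries of Table \ref{1} gives
\begin{align}
Y_1 &= (a \wedge c) \oplus a \oplus b \oplus d,\\
Y_2 &= (a \wedge c) \oplus c \oplus b \oplus d.
\end{align}
The only term that couples the two senders is $a \wedge c$: were it absent, Sender~$1$ could send $(a,b) = (m_1,0)$ and Sender~$2$ could send $(c,d) = (m_2,0)$, yielding $Y_1 = m_1$ and $Y_2 = m_2$. Classically the two senders cannot subtract off $a \wedge c$ because neither knows the other's message bit, which is exactly the obstruction behind Lemma \ref{joint}. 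The key observation is that $a \wedge c$ has precisely the form $xy$ that appears on the right-hand side of the PR-box relation \eqref{PRdist}.

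So the code is: on each channel use, Sender~$1$ with message bit $m_1$ feeds $x = m_1$ into its end of a shared PR-box, obtains output $\alpha$, and transmits $X_1 = (m_1, \alpha)$; Sender~$2$ with message bit $m_2$ feeds $y = m_2$ into its end, obtains $\beta$, and transmits $X_2 = (m_2, \beta)$. By \eqref{PRdist}, $\alpha \oplus \beta = m_1 m_2 = m_1 \wedge m_2$. Substituting $a = m_1$, $b = \alpha$, $c = m_2$, $d = \beta$ into the normal form, the term $b \oplus d = \alpha \oplus \beta$ cancels $a \wedge c$ in both lines, leaving $Y_1 = m_1$ and $Y_2 = m_2$ \emph{deterministically}. (Equivalently, one checks that across the four values of $(m_1,m_2)$ this code realizes exactly the shaded cells of the left-hand subtable of Table \ref{allowed}.) Each receiver then decodes by $\hat m_i = Y_i$; the error probability is identically $0$, so over $n$ channel uses each pair conveys $n$ bits and $C_\text{sum} \ge 2$. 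Together with the converse this gives $C_\text{sum} = 2$, the algebraic maximum, as claimed.

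The only genuine step is the first one — extracting the algebraic normal form of the channel and recognizing that the cross term is exactly the CHSH/PR correlation $xy$; after that the verification is a one-line substitution, and everything else (zero-error decoding, block-coding up to rate $1$, the converse) is routine. The main thing I would be careful about is the bit-ordering convention when reading the formulas off Table \ref{1}, since relabeling "high" and "low" bits changes the intermediate expressions — though, by the symmetry of the construction, not the conclusion.
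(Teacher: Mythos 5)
Your encoding is exactly the paper's: each sender feeds its message bit into the PR-box and transmits the pair (message bit, box output), so that the box relation $\alpha \oplus \beta = m_1 \wedge m_2$ steers the joint channel input onto the ``good'' cells, and the trivial one-bit-per-receiver converse then pins $C_\text{sum}$ at $2$. The paper verifies correctness by enumerating the eight reachable $(X_1,X_2)$ pairs (Table~\ref{wincond}) and matching them against the shaded cells of Table~\ref{allowed}; you instead extract the algebraic normal form $Y_1 = (a\wedge c)\oplus a\oplus b\oplus d$, $Y_2 = (a\wedge c)\oplus c\oplus b\oplus d$ and show the PR-box output literally cancels the cross term $a\wedge c$. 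I checked your formulas against all sixteen entries of Table~\ref{1} and they are correct. This is the same code and the same capacity argument, but your normal-form derivation is a cleaner, more explanatory verification --- it makes visible that the channel's coupling term is precisely the CHSH/PR correlation $xy$, which is exactly what the PR-box is built to cancel, and it avoids the table lookup entirely.
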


We have all but spelled out our super-quantum strategy. In this strategy, the senders can communicate only one bit $m$. They encode this bit into a two-bit codeword by concatenating it with the single-bit output of the PR box which results from feeding $m$ into their respective sides of the box. That is, Sender 1 sends $X_1= ``m_1\,a"$ and Sender 2 sends $X_2= ``m_2\,b"$ where $a,b$ are the outputs of the PR box. This strategy guarantees $a \oplus b= m_1 m_2$. The possible sets of encoded channel inputs produced by this strategy are listed in Table \ref{wincond}. Comparing that to Table \ref{allowed} reveals that the resulting encoded message pairs are special for our channel: {\em they are exactly the combinations whereby Receiver 1 and Receiver 2 respectively receive the original 1-bit messages that Sender 1 and Sender 2 intended to send}. Hence, this super-quantum strategy enables perfect message transmission.
 
\begin{table*}[!htbp]
\centering
\begin{tabular}{ |c|c|c|} 
 \hline
 $(m_1,m_2)$ (PR-box input) & $(a,b)$ (PR-box output) & Encoding (Sender 1, Sender 2) \\ \hline 
 (0,0) & (1,1) or (0,0) & (01,01) or (00,00) \\
 (0,1) & (1,1) or (0,0) & (01,11) or (00,10) \\
(1,0) &  (1,1) or (0,0) & (11,01) or (10,00)  \\
(1,1) &  (0,1) or (1,0) & (10,11) or (11,10) \\ 
 \hline
\end{tabular} 
\caption{\label{wincond} The rightmost column shows all possible combinations of the two senders' inputs to the channel using the encoding strategy described above: each sender's PR-box output (either $a$ or $b$) is concatenated with her input $m_i$.}
\end{table*}

\begin{corollary}[Capacity of Channel I with 1 bit of communication between senders] \label{1bitcom}
If senders are allowed to share one bit of communication, they will achieve a joint rate of 2.
\end{corollary}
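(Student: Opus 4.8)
The plan is to show that one bit of communication between the senders is at least as powerful, for Channel~I, as the shared PR-box of Lemma~\ref{superquant}, and hence also attains the algebraically maximal joint rate of $2$. The intuition is that a PR-box is a \emph{non-signaling} resource, whereas a bit of communication is strictly stronger, so anything the box achieves a bit of communication can reproduce. The key observation is that the super-quantum strategy never used the full PR-box: it used only the correlation $a \oplus b = m_1 m_2$ (logical AND) between the two appended bits --- the PR/CHSH winning condition --- while the individual values $a$ and $b$ were otherwise unconstrained.

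Concretely, I would have Sender~1 transmit its message bit $m_1$ to Sender~2 over the one available bit of communication (used once per channel use). Sender~2 then knows both $m_1$ and $m_2$, so it can compute any prearranged function of the pair; in particular the senders can fix the \emph{deterministic} assignment $a := 0$ and $b := m_1 m_2$, which trivially satisfies $a \oplus b = m_1 m_2$ and requires no information on Sender~1's part to set $a = 0$. (Alternatively, with pre-shared randomness a single bit of communication exactly simulates a PR-box, and Lemma~\ref{superquant} then applies verbatim; the deterministic version above avoids even needing that.) Sender~1 then inputs the two-bit codeword ``$m_1\,0$'' and Sender~2 inputs ``$m_2\,b$''.

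The remaining step is the routine verification against Table~\ref{1}: these encoded inputs are exactly the shaded combinations $(00,00),(00,10),(10,00),(10,11)$ appearing in the left-hand subtable of Table~\ref{allowed} (equivalently, the ``$a=0$'' branch of each row of Table~\ref{wincond}), for which the channel outputs $Y_1 = m_1$ and $Y_2 = m_2$. Hence each receiver recovers its message with zero error on every channel use; repeating the scheme over $n$ uses gives $P_e^{(n)} = 0$ for all $n$, so $R_1 = R_2 = 1$ and $R_1 + R_2 = 2$. Since Channel~I delivers only a single bit to each of the two receivers, $R_1 + R_2 \le 2$ always, so this joint rate is optimal and the capacity with one bit of communication is exactly $2$.

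I do not expect a genuine obstacle here --- the statement is essentially a consistency check confirming the resource hierarchy (one bit of communication $\succeq$ PR-box $\succ$ entanglement $\succ$ nothing). The only point needing a moment's care is the \emph{direction} and \emph{quantity} of communication: one bit \emph{per channel use}, sent from Sender~1 to Sender~2, suffices precisely because there exists a single deterministic pair $(a,b)$ meeting $a \oplus b = m_1 m_2$, so neither shared randomness nor any two-way exchange is required; and one bit total over the whole block would, of course, not suffice, so the per-use reading (matching the one-box-per-use convention of Lemma~\ref{superquant}) is the intended one.
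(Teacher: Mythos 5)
Your proposal is correct and takes essentially the same approach as the paper: use the one bit to send $m_1$ to Sender~2, who then deterministically fills in a padding bit satisfying $a\oplus b=m_1 m_2$ so the inputs land in the valid set of Table~\ref{wincond}. The only cosmetic difference is your choice $a=0,\ b=m_1m_2$ where the paper has Sender~1 duplicate ($a=m_1$) and Sender~2 set $b=m_1\oplus m_1m_2$; both are deterministic realizations of the same idea.
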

\begin{proof}
This strategy follows straightforwardly from the technique described in the proof of the previous lemma. This time, Sender 1 encodes her message bit by duplicating it. She then uses her one bit of communication by sending this message bit to Sender 2. Sender 2 uses this knowledge to replicate the action of the PR-box to pad his own one-bit message, and Table \ref{wincond} shows that this is always possible. Since this strategy achieves (deterministically) exactly the same input sets as the PR-box assisted strategy described above, it too achieves a joint rate of 2. 
\end{proof}

\subsection{Capacity of Channel I with quantum assistance}

We saw that Alice and Bob can coordinate their inputs using a non-classical resource to achieve perfect coding. Does a quantum resource suffice, or only a super-quantum one?

\begin{lemma}[Capacity of Channel I if senders share entanglement]\label{lemquan}
If the senders are allowed to share an entangled quantum state $\ket{\Phi}$ of dimension $2 \times n$, \[C_\text{sum}<2.\]
\end{lemma}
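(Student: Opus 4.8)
The plan is to show that a sum-capacity of $2$ would force the two senders to win the CHSH game with certainty, and then to rule this out with Tsirelson's bound \eqref{QM-CHSH}; note the dimension ``$2\times n$'' is irrelevant — the argument excludes \emph{any} finite-dimensional entangled state. First I would put Channel I in algebraic form: writing the two-bit codewords as $X_1=(p,q)$ and $X_2=(r,s)$, Table \ref{1} is exactly the pair of parities
\begin{equation}
Y_1 = p\oplus q\oplus s\oplus pr,\qquad Y_2 = q\oplus r\oplus s\oplus pr,
\end{equation}
so in particular $Y_1\oplus Y_2=p\oplus r$. Next I would invoke the classification already established for this channel (the two sub-tables of Table \ref{allowed}): any strategy achieving $I(X_1:Y_1)=I(X_2:Y_2)=1$ must, up to a fixed bit-flip on each sender's side, place the message bit in the first codeword slot, i.e.\ $p=m_1$, $r=m_2$; substituting into the parities, the demand $Y_1=m_1$, $Y_2=m_2$ collapses to the \emph{single} constraint $q\oplus s=m_1\wedge m_2$ on the second codeword bits. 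This is precisely the CHSH winning condition with questions $(m_1,m_2)$ and answers $(q,s)$ (the second admissible family of Table \ref{allowed} is the bit-flipped image and gives the same condition with the answers locally relabelled). So perfect coding on one use is \emph{equivalent} to a deterministic CHSH-winning strategy, which is impossible with shared entanglement: as recalled in the Introduction, the entanglement-assisted CHSH success probability is at most $\cos^2(\tfrac{\pi}{8})<1$ (equivalently $S_\text{QM}\le 2\sqrt2$).

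To lift this single-use impossibility to the asymptotic claim $C_\text{sum}<2$, I would run the standard Fano converse. If $C_\text{sum}=2$ there is a sequence of $(2^{nR_1},2^{nR_2},n)$ codes with $R_1+R_2\to2$ and $P_e^{(n)}\to0$; since $Y_1^nY_2^n$ carries at most $2n$ bits, Fano's inequality gives $I(M_1M_2;Y_1^nY_2^n)\ge 2n-o(n)$, hence $\tfrac{1}{n}\sum_i H(Y_{1i}Y_{2i})\to 2$ and $\tfrac{1}{n}\sum_i H(Y_{1i}Y_{2i}\mid Y_1^{i-1}Y_2^{i-1},M_1,M_2)\to0$. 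Thus a $1-o(1)$ fraction of channel uses have an output that is asymptotically uniform and asymptotically a deterministic function of the messages (and the past outputs). Feeding such a ``good'' use back through the structural reduction above yields an entanglement-assisted protocol that wins CHSH with probability $\to1$, contradicting Tsirelson's bound; therefore $C_\text{sum}<2$.

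The hard part is exactly this last gluing step — manufacturing a near-perfect CHSH violation out of block-level asymptotics — and three points need care. (i) The classification of Table \ref{allowed} is stated for \emph{exact} perfect coding, so one needs a robustness version: an output that is only $\epsilon$-close to uniform and $\epsilon$-close to deterministic-in-the-messages still pins the coding near one of the two admissible families and forces CHSH bias near $1$. (ii) The conditioning on $Y_1^{i-1}Y_2^{i-1}$ and on $M_1,M_2$ must be carried through, and one must verify that the correlation read off on a single use is still realizable by local measurements on a \emph{shared quantum state} (it is, because post-selecting on earlier outcomes leaves the senders with a valid, if mixed, bipartite state) — this is exactly why Tsirelson's bound applies rather than the PR-box value $4$. (iii) Since the messages are long strings, not single CHSH questions, there is no literal per-use input to hand the players; one fixes this either by a more careful single-letterization that extracts effective binary inputs for a typical use, or by appealing to the perfect multiplicativity of the quantum value of XOR games under parallel repetition, which makes the product of the $\cos^2(\tfrac{\pi}{8})$ factors decay and so excludes success probability $\to1$. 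Modulo this bookkeeping, the lemma says no more than ``entanglement cannot beat $2\sqrt2$,'' in sharp contrast with Lemma \ref{superquant}, where the PR-box saturates $S=4$.
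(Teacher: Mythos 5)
Your route is correct in spirit and genuinely different from the paper's. The paper proves Lemma~\ref{lemquan} in the Brassard--M\'ethot--Tapp style (Appendix~\ref{app1}): it normal-forms the shared state via Schmidt decomposition to $\alpha\ket{00}+\beta\ket{11}$, writes the POVM elements as rank-one projectors on the Bloch sphere, computes $\Pr[i,j]$ in closed form (Eq.~\eqref{eq:BMT}), characterizes exactly when $\Pr[i,j]=0$ via an AM--GM/hemisphere argument, and then \emph{constructs} a deterministic classical strategy that always picks outcome pairs with $\Pr[i,j]>0$, thereby replicating any perfect quantum protocol classically and landing in contradiction with Lemma~\ref{joint}. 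You instead algebraize the channel ($Y_1=p\oplus q\oplus s\oplus pr$, $Y_2=q\oplus r\oplus s\oplus pr$, hence $Y_1\oplus Y_2=p\oplus r$; I checked this against Table~\ref{1} and it is correct), observe that perfect coding forces $q\oplus s=m_1\wedge m_2$, and invoke Tsirelson's bound. Your argument buys generality: Tsirelson's bound holds in any dimension, so you dispense with the $2\times n$ restriction that the paper needs (its reduction to $2\times 2$ via the Schmidt lemma hinges on $\min(d_A,d_B)=2$). The paper's argument buys a clean, constructive quantum-to-classical rounding that stays entirely inside the channel model and avoids any appeal to XOR-game theory. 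Two caveats on your side: (i) the step ``up to a fixed bit-flip, $p=m_1$ and $r=m_2$'' is not really \emph{established} by Table~\ref{allowed}, which merely exhibits two admissible strategies -- you should derive it from $p\oplus r=m_1\oplus m_2$ being deterministic, with the senders' locality forcing $p=m_1\oplus h$, $r=m_2\oplus h$ for a common bit $h$ that can be absorbed; and (ii) you are rightly uneasy about the Fano/asymptotic gluing, but note that the paper's own proof equates ``rate~2'' with ``single-shot perfect coding'' just as loosely, so on this point you are no worse off and arguably more forthcoming. The XOR-game parallel-repetition observation in your point (iii) is the cleanest way to close that gap and is a genuine improvement worth spelling out.
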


\begin{proof}
For this proof, we borrow notation from \cite{BMT}. Let $\mathcal P$ denote the set of all POVMs acting on a single qubit, and $O_{\mathcal P}$ denote the set of all outcomes for the POVM $\mathcal P$. Let $m_i, X_i, Y_i$ denote the message bits, encoded message bits (input to channel) and channel output bits respectively, where the subscript $i$ denotes the respective sender-receiver pair.The two senders share an entangled state $\ket{\Phi}$.  

\begin{align}
\mathcal C_1 \, : \, m_1 \rightarrow \mathcal P_1 & \quad \mathcal C_2 \, : \, m_2 \rightarrow \mathcal P_2  \nonumber  \\
\mathcal E_1 \, : \, m_1 \times O_{\mathcal P_1\ket{\Phi}} \rightarrow X_1 & \quad \mathcal E_2 \, : \, m_2 \times O_{\mathcal P_2\ket{\Phi}} \rightarrow X_2  \nonumber \\
 \text{Channel}: (X_1,X_2) &\rightarrow (Y_1, Y_2) \nonumber\\
 \mathcal D: (Y_1, Y_2) &\rightarrow (\hat{m_1},\hat{m_2})  \label{mmodel} 
\end{align}

Any quantum strategy for communication can be mathematically represented as four consecutive mappings $(\mathcal C_i , \mathcal E_i, \text{Channel}, \mathcal D)$. The senders independently choose a POVM $(\mathcal C_i)$ depending on their message bit $m_i$, apply that POVM to their share of the entangled state, and apply an encoding function $(\mathcal E_i)$ that maps the measurement outcome of the POVM to a 2-bit input to the channel. These bits go through the channel and the output of the channel is decoded $(\mathcal D)$ by the two receivers. This process is illustrated in Figure \ref{gen}.  

This is indeed the most general form of a quantum communications strategy; Naimark's theorem guarantees that a POVM is mathematically equivalent to a general measurement, and the most general decoder looks at both the channel outputs (including as a special case a restricted decoding strategy where receivers do not communicate). This model (and the proof it inspires) is in very much the same spirit as the model in \cite{BMT}, which was used to prove a similar result for two-player pseudo-telepathy games. 

\begin{figure}[!htbp]
\includegraphics[scale=0.35]{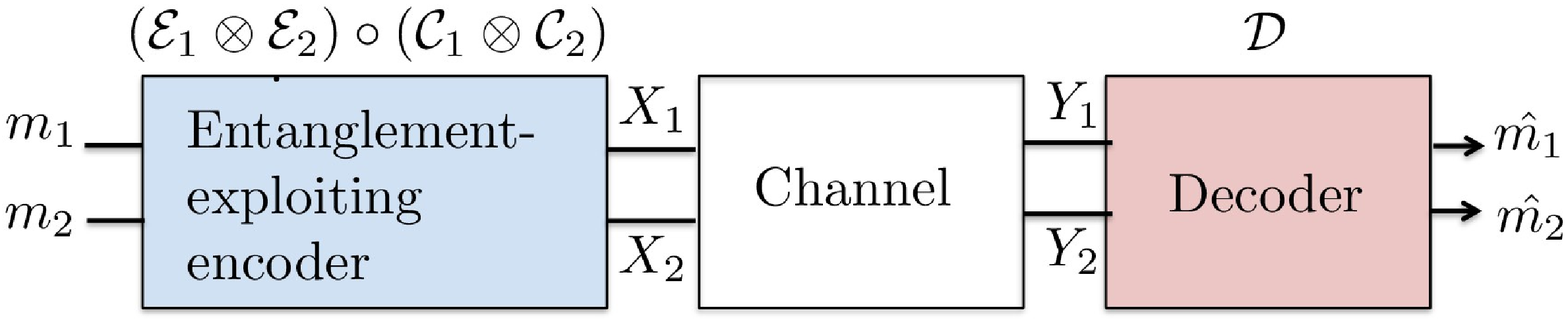}
\caption{\label{gen}Model of a quantum communication system over this channel.}
\end{figure}

Our goal is to show that that if there exists a quantum strategy that achieves rate 2 (ie. perfect coding), there is a classical strategy that achieves the same rate. But, since there is {\em not} a classical strategy that achieves perfect coding, there cannot be a quantum one. 

We assume that any decoding strategy depends deterministically, and solely, on the bits that the receivers receive. That is, every time a particular $(Y_1, Y_2)$ is received, the decoding step infers a fixed, corresponding, $\hat{m_1}$ and $\hat{m_2}$. Demanding a rate of 2 rules out any probabilistic decoding strategy, so the inferred $\hat{m_1}$ and $\hat{m_2}$ have to be the right ones. The question is now whether there exist functions $(\mathcal E_1 \otimes \mathcal E_2) \circ (\mathcal C_1 \otimes \mathcal C_2)$ such that the overall map from $m_1 \times m_2$ to $(X_1 \times X_2) / (Y_1 \times Y_2) $ is injective. This means we can group the 16 options for $(X_1, X_2)$ based on the resulting $(Y_1,Y_2)$ and stipulate that the entanglement-assisted $(\mathcal E_1 \otimes \mathcal E_2) \circ (\mathcal C_1 \otimes \mathcal C_2)$ must achieve the following map:

\begin{table}[!htbp]
\centering
  \begin{tabular}{| c || c |} \hline
$\mathbf{(m_1, m_2)}$ & $\mathbf{(X_1, X_2)}$ \\ \hline
$(m_1, m_2)_a$ & $(00,00), (01,01), (10,10), (11,11)$ \\ \hline
$(m_1, m_2)_b$ & $(00,01), (01,00), (10,11), (11,10)$ \\ \hline
$(m_1, m_2)_c$ & $(00,10), (01,11), (10,01), (11,00)$ \\ \hline
$(m_1, m_2)_d$ & $(00,11), (01,10), (10,00), (11,01)$ \\ \hline
\hline
 \end{tabular}
\caption{\label{map} Entanglement-assisted map between message bits and their encoding. $(m_1, m_2)_a, (m_1, m_2)_b, (m_1, m_2)_c, (m_1, m_2)_d$ must correspond to some permutation of the message set $\{(0,0), (0,1),(1,0),(1,1)\}$.}
  \end{table}

Therefore, all we are asking of our copycat classical strategy is that, for any combination of message bits, it should encode them as some subset of the allowed encodings in the right column of the corresponding row -- since this suffices for perfect decoding. Note that the assumption that our quantum strategy is {\em perfect} is key -- our classical strategy only needs to never produce an illegal output, even though some legal outputs may never occur. \textbf{It turns out that it is entirely possible to devise a classical strategy that never produces an output that is illicit from a POVM}, and this is proved in Appendix \ref{app1}.
\end{proof}

\section{Channel II} \label{sec: 5}

In this section, we present a class of related channels to Channel I that displays yet stronger capacity separations: 
\[\textbf{C}_\textbf{classical} < \textbf{C}_\textbf{quantum} < \textbf{C}_\textbf{super-quantum}.\] 

To get Channel II, we modify Channel I by allowing now two types of outputs. Consider the cells in Table \ref{2}. The cells with $ee$ are outputs that always get erased. All other cells are outputs that are {\em erased} with probability $1-\epsilon$, but with probability $\epsilon$ output the two bits stated. We will see later that the parameter $\epsilon$ can be tuned to change the magnitude of the capacity separations. We prove the desired inequalities for this channel when $\epsilon$ is taken to be small.
\begin{table}[!htbp]
\centering
  \begin{tabular}{ |c || c | c | c | c ||} \hline
 $X_1$\textbackslash $X_2$  & \textbf{00} & \textbf{01} & \textbf{10} & \textbf{11} \\ \hline \hline
\textbf{00} & 00/ee   &  ee & 01/ee  & ee \\ \hline
\textbf{01} & ee  & 00/ee  & ee & 01/ee \\ \hline
\textbf{10} & 10/ee & ee & ee & 11/ee  \\ \hline
\textbf{11} & ee &  10/ee & 11/ee  & ee \\ \hline
\hline
  \end{tabular}
\caption{\label{2}Channel II: a variation on Channel I in which the channel outputs not corresponding to the PR-box-encoded joint inputs are erased with probability 1, and the channel outputs corresponding to the PR-box-encoded joint inputs are erased with probability $1-\epsilon$. Erased bits are denoted by `e'.}
 \end{table}
\subsection{Super-quantum and entanglement-assisted capacities of Channel II}
\begin{lemma}[Capacity of Channel II with super-quantum resources]
There exists a super-quantum-assisted strategy on Channel II that achieves $R_1+R_2 = 2\epsilon$.
\end{lemma}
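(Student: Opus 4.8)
The plan is to use, unchanged, the PR-box-assisted encoding of Lemma~\ref{superquant}: on each channel use each sender concatenates her one-bit message $m_i$ with the output of her half of a shared PR-box that has been fed $m_i$, so that the joint channel input $(X_1,X_2)$ is always one of the eight pairs listed in the rightmost column of Table~\ref{wincond}. These are precisely the positions of the shaded cells of the left-hand subtable of Table~\ref{allowed}, and --- comparing with Table~\ref{2} --- they are exactly the eight non-$ee$ cells of Channel~II. Hence, under this encoding, on every channel use the two receivers jointly obtain the ordered pair $(m_1,m_2)$ with probability $\epsilon$, and the pair $(e,e)$ with probability $1-\epsilon$.

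The next step is to identify the channel each sender--receiver pair effectively sees under this fixed encoding: for pair~$i$ it is exactly a binary erasure channel with erasure probability $1-\epsilon$, since Receiver~$i$ obtains the intended bit $m_i$ with probability $\epsilon$ and otherwise the distinguished erasure symbol $e$, always knowing which case occurred. The two pairs' erasure events are perfectly correlated, but this is irrelevant to the sum rate: in the model of Section~\ref{sec: 3} each decoder acts on its own received sequence $y_i^n$ alone, so pair~$i$'s achievable rate is governed solely by its own marginal channel.

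Finally I would invoke the standard fact that $\mathrm{BEC}(p)$ has capacity $1-p$, here with $p = 1-\epsilon$: each pair can therefore transmit reliably at any rate below $\epsilon$. Concretely, Sender~$i$ fixes a capacity-approaching length-$n$ block code for $\mathrm{BEC}(1-\epsilon)$ and, on the $j$-th channel use, transmits the concatenation of the $j$-th codeword bit with the $j$-th PR-box output obtained from feeding that same bit into the box; the two receivers decode independently. This makes the rate pair $(R_1,R_2)=(\epsilon,\epsilon)$ achievable, and hence $R_1+R_2 = 2\epsilon$, as claimed.

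There is essentially no obstacle here --- the construction is inherited wholesale from Lemma~\ref{superquant} --- so the work is confined to two small checks: (i) that the Lemma~\ref{superquant} encoding still produces only non-$ee$ cells once the channel table is replaced by Table~\ref{2} (immediate from the positions of the shaded cells of Table~\ref{allowed}), and (ii) that the correlation between the two pairs' erasure events does not couple their decoders and therefore cannot push the sum-capacity below $2\epsilon$.
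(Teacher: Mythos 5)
Your proposal is correct and follows essentially the same route as the paper: reuse the PR-box encoding of Lemma~\ref{superquant}, observe that it lands only on the non-$ee$ cells of Table~\ref{2}, so each sender--receiver pair sees a binary erasure channel with erasure probability $1-\epsilon$ and hence rate $\epsilon$. The only addition you make --- noting explicitly that the correlation between the two pairs' erasure events is harmless because each decoder acts only on its own output --- is a reasonable elaboration of a point the paper leaves implicit, but does not constitute a different argument.
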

\begin{proof}
Encoding proceeds exactly as in in the previous section. Why this works is best visualized by comparing our channel in Table \ref{2} to the set of encoded messages produced by the PR-box strategy from the previous section, summarized in the left half of Table \ref{allowed} -- {\em the encoding only produces the channel inputs whose outputs are erased with probability $1-\epsilon$ by Channel II}. Since preserved outputs contain exactly the first bits of each sender's message, they are perfectly decoded by each receiver. This therefore amounts to a binary erasure channel for each sender-receiver pair with erasure parameter $1-\epsilon$. This gives a joint rate of $2\epsilon$.
\end{proof}

This intuition is this: any classical choice of input alphabets for the two senders results in at least one combination of inputs that is always erased by the channel. Using a PR-box helps us avoid these `bad' input combinations, and using entanglement helps us avoid them with probability $\cos^2(\frac{\pi}{8}) \approx 0.854$, as we will see next.

\begin{lemma}[Achievable rate with senders sharing entanglement] \label{quant}
There exists an entanglement-assisted strategy on Channel II that achieves $R_1+R_2 = \left[2 \cos^2(\frac{\pi}{8})\right]\epsilon$.
\end{lemma}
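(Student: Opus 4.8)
The plan is to reuse the encoding of the super-quantum lemma almost verbatim, replacing the PR-box by the optimal quantum CHSH strategy; since entanglement wins the CHSH game with probability $\cos^2(\pi/8)$ rather than $1$, the construction should turn each sender-receiver pair's channel from a binary erasure channel with erasure parameter $1-\epsilon$ into one with erasure parameter $1-\epsilon\cos^2(\pi/8)$, whose capacity is $\epsilon\cos^2(\pi/8)$, so that summing over the two pairs yields the claimed rate.

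First I would fix the Tsirelson-optimal CHSH strategy \cite{Tsirelson}: on each channel use the senders share a fresh maximally entangled qubit pair; sender $i$, whose message symbol on this use is the bit $m_i$, measures her half in the CHSH measurement setting indexed by $m_i$ and records the binary outcome $a_i$, arranged so that $\Pr[a_1\oplus a_2 = m_1 m_2]=\cos^2(\pi/8)$ \emph{uniformly} over all four inputs $(m_1,m_2)$ (a property of the standard strategy). The encoding is then identical to the super-quantum one: sender $i$ feeds the two-bit codeword $X_i = m_i a_i$ into the channel, message bit first.

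Next I would trace this through Channel II. When the game is won, $a_1\oplus a_2=m_1 m_2$, so $(X_1,X_2)=(m_1a_1,m_2a_2)$ is one of the eight shaded cells of the left half of Table \ref{allowed}; comparing with Table \ref{2}, each such cell is erased with probability $1-\epsilon$ and otherwise outputs $(Y_1,Y_2)=(m_1,m_2)$, so both receivers recover their symbols. When the game is lost, $(X_1,X_2)$ lands in an unshaded cell, which Channel II erases with probability $1$ — note that a lost game yields only an erasure, never a wrong output bit, which is exactly what keeps the induced channel an erasure channel. Because the win probability is input-independent, the marginal channel from $m_i$ to $Y_i$ is genuinely a binary erasure channel with erasure probability $1-\epsilon\cos^2(\pi/8)$, hence of capacity $\epsilon\cos^2(\pi/8)$.

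Finally I would argue simultaneous achievability of $(R_1,R_2)=\bigl(\epsilon\cos^2(\pi/8),\epsilon\cos^2(\pi/8)\bigr)$: each sender's encoder depends only on her own message and her own half of the (freshly drawn) entanglement, each receiver decodes from its own output sequence alone, and that sequence is an i.i.d.\ binary erasure channel with parameter $1-\epsilon\cos^2(\pi/8)$ over the $n$ uses, so capacity-achieving erasure codes at both senders send $P_e^{(n)}\to 0$ and $R_1+R_2=2\epsilon\cos^2(\pi/8)$. I expect the only point inviting pushback to be this simultaneity step in the presence of correlated erasures — in Channel II the shaded cells erase $Y_1$ and $Y_2$ jointly and both pairs share the one CHSH outcome — but the correlation is harmless: all that is needed is that each receiver's marginal channel be exactly the stated erasure channel and that the two encoders never coordinate beyond the shared state, so no rate conflict arises. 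The remaining labour is the finite check, read off Table \ref{allowed} and Table \ref{wincond}, that the CHSH-losing inputs are precisely the always-erased cells of Table \ref{2} and the CHSH-winning inputs precisely the cells erased only with probability $1-\epsilon$.
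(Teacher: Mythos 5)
Your proof matches the paper's approach: play the Tsirelson-optimal CHSH game on a shared Bell pair and concatenate each message bit with the measurement outcome, so that CHSH-winning rounds land on the probability-$\epsilon$ surviving cells and CHSH-losing rounds on the always-erased cells, giving each sender-receiver pair a binary erasure channel with erasure probability $1-\epsilon\cos^2(\pi/8)$ and hence the joint rate $2\epsilon\cos^2(\pi/8)$. The extra care you take regarding the input-independence of the CHSH win probability and the harmlessness of the correlated erasures across the two pairs is left implicit in the paper but is a correct and worthwhile clarification.
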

\begin{proof}
We will describe such a strategy. The encoding step is a simple extension of the previous one: in place of the PR-box, let the two senders share the CHSH entangled pair, $|\Psi\rangle = \frac{\ket{00}+\ket{11}}{\sqrt{2}}$. This is the same state that they can use to win the CHSH game with higher-than-classical probability. The essence of the strategy is that they play a CHSH game to communicate. 
Recall that the winning condition of the CHSH game is that 
\begin{equation}\label{eqn:wincond}
a \oplus b= r \wedge s
\end{equation}
where $r := \text{player 1's question}$, $s :=\text{player 2's question}$, $a := \text{player 1's response}$, $b: =\text{player 2's response}$. With a shared Bell state, the two players can perform measurements on their state in such a way that their question and response bits fulfill Equation \eqref{eqn:wincond} with probability $\cos^2(\frac{\pi}{8})$. But observe that this is exactly the equation that {\em always} holds true for all licit input-output pairs (inputs: $r, s$, outputs: $a,b$) from a PR-box. Therefore, instead of concatenating the PR-box output with their message bit, the senders now concatenate $a$ or $b$ with their message bit, where $a$ and $b$ are obtained by measurements on their shared entangled state. That is, $a$ and $b$ are their `response' bit in the CHSH game if their desired message had been their `question' bit from the referee. 

This encoding strategy allows for pretty-good communication. We may observe that we obtain a `good' encoding (one that lands on the double-valued cells in Table \ref{2}) with probability $\cos^2\left(\frac{\pi}{8}\right) \approx 0.854$; we obtain a `bad' encoding (one that lands on the single-valued cells, thus always gets erased) with probability $\sin^2\left(\frac{\pi}{8}\right) \approx 0.147$. Hence, each sender gets his input bit erased with probability $\alpha = \sin^2(\frac{\pi}{8}) + (1-\epsilon) \cos^2(\frac{\pi}{8})$, and transmitted perfectly with probability $\epsilon\cos^2(\frac{\pi}{8})$. This amounts to each sender-receiver pair experiencing a binary erasure channel with erasure probability $\alpha$. Since the capacity of a binary erasure channel is $1-\alpha$, the joint rate achieved by such a strategy is $2(1-\alpha) \approx 1.707 \epsilon $.
\end{proof}

\subsection{Capacity separations for Channel II}
Finally, we reach the capstone lemmas of this section:
\begin{lemma}[Classical vs. quantum capacities of Channel II]
For sufficiently small $ \epsilon, \textbf{C$^{(\epsilon)}$}_\textbf{classical} < \textbf{C$^{(\epsilon)}$}_\textbf{quantum}$.
\end{lemma}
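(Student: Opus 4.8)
The plan is to sandwich the two quantities. From Lemma~\ref{quant} we already have the lower bound $C^{(\epsilon)}_{\mathrm{quantum}}\ge 2\cos^2(\tfrac{\pi}{8})\,\epsilon=(1+\tfrac{1}{\sqrt{2}})\epsilon\approx 1.707\,\epsilon$, so it suffices to prove an upper bound $C^{(\epsilon)}_{\mathrm{classical}}\le c^{\star}\epsilon+O(\epsilon^{2})$ with an explicitly computable constant $c^{\star}<1+\tfrac1{\sqrt2}$; the two bounds then separate once $\epsilon$ is below some threshold. As in the treatment of Channel~I (Lemma~\ref{joint} and Algorithm~\ref{MGD}) I would work with the single-letter quantity $M(\epsilon):=\max_{p(x_{1})p(x_{2})}\big[I(X_{1};Y_{1})+I(X_{2};Y_{2})\big]$, noting that a time-sharing variable $Q$ only convex-combines rate pairs, so $R_{1}+R_{2}\le I(X_{1};Y_{1}\mid Q)+I(X_{2};Y_{2}\mid Q)\le M(\epsilon)$; hence $C^{(\epsilon)}_{\mathrm{classical}}\le M(\epsilon)$.

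The key observation for estimating $M(\epsilon)$ is that on Channel~II every input pair is erased with probability at least $1-\epsilon$, so $\Pr[Y_{i}\neq e]\le\epsilon$ for either receiver. Expanding $H(Y_{i})$ and $H(Y_{i}\mid X_{i})$ to first order in $\epsilon$, the would-be dominant $\epsilon\log(1/\epsilon)$ contributions cancel between the two terms, leaving $I(X_{i};Y_{i})=\epsilon\,\Psi_{i}+O(\epsilon^{2})$, where $\Psi_{i}$ is a bounded, continuous function of the two input distributions, expressible through $\phi(t):=-t\log_{2}t$, the "good-pair" probability $\beta=\Pr[(X_{1},X_{2})\in G]$ (with $G$ the eight input pairs not erased with certainty), the per-symbol hitting probabilities $g_{i}(\cdot)$, and the quantities $\gamma_{i}$ governing how the transmitted message bit splits between $0$ and $1$. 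Consequently $M(\epsilon)=c^{\star}\epsilon+O(\epsilon^{2})$ with $c^{\star}=\max_{p(x_{1})p(x_{2})}(\Psi_{1}+\Psi_{2})$, and everything reduces to a finite-dimensional optimization.

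To pin $c^{\star}$ from below I would exhibit the explicit strategy in which each sender transmits a (slightly biased) mixture of the codewords $00$ and $10$, so that on a good pair Receiver~$i$ reads off the sender's message bit; this yields $\Psi_{1}+\Psi_{2}=3/2$ at the uniform bias and $\approx 1.52$ at the optimal bias, already comfortably below $1.707$. To show no other input distribution beats this, I would reduce the optimization using the channel's symmetry group: the set $G$ is a bipartite $8$-cycle whose symmetries include the LSB flips on either sender and the sender--receiver-pair swap, collapsing the problem to a one- or two-parameter family where calculus finishes the job; alternatively one verifies $c^{\star}<1.707$ numerically by the same modified-gradient-descent search used for Channel~I, now maximizing $\Psi_{1}+\Psi_{2}$ over the product of two $3$-simplices. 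Combining the pieces gives $C^{(\epsilon)}_{\mathrm{classical}}\le M(\epsilon)=c^{\star}\epsilon+O(\epsilon^{2})<2\cos^{2}(\tfrac{\pi}{8})\,\epsilon\le C^{(\epsilon)}_{\mathrm{quantum}}$ for all sufficiently small $\epsilon$.

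The hard part will be that last step: the window between the classical achievable rate ($\approx 1.52\,\epsilon$) and the entanglement-assisted rate ($\approx 1.707\,\epsilon$) is narrow, so the separation genuinely hinges on ruling out clever classical encodings — in particular ones spreading mass over all four codewords or correlating the two senders via shared randomness. A crude concavity bound such as $\Psi_{1}+\Psi_{2}\le 2(\beta+\phi(\beta))$, whose maximum over $\beta\in[0,1]$ is about $2.12$, is far too weak; one must actually use the structural constraints tying together $\beta$, the $g_{i}$, and the $\gamma_{i}$ imposed by the $8$-cycle, or else trust the numerical search. A secondary point, inherited from Channel~I, is justifying that the single-letter $M(\epsilon)$ is the correct converse quantity for a two-sender, two-receiver channel; and if one wants an effective threshold on $\epsilon$ rather than an asymptotic statement, the $O(\epsilon^{2})$ remainder should be replaced by an explicit bound (or one should show $M(\epsilon)/\epsilon$ is monotone), which is routine but must be written out.
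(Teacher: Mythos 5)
Your sandwich framework---lower-bound $C^{(\epsilon)}_{\mathrm{quantum}}\ge 2\cos^2(\pi/8)\,\epsilon$ from Lemma~\ref{quant} and upper-bound $C^{(\epsilon)}_{\mathrm{classical}}\le c^\star\epsilon+O(\epsilon^2)$ with $c^\star<1+\tfrac{1}{\sqrt2}$---is exactly the structure of the paper's argument, and your first-order expansion of $I(X_i;Y_i)$ in $\epsilon$ is the same kind of computation the paper packages as Lemma~\ref{lem}. Your achievability observation is also correct and sharp: biasing each sender over $\{00,10\}$ gives $\Psi_1+\Psi_2=2\bigl[-p\log p-p(1-p)\log(1-p)\bigr]$, equal to $3/2$ at $p=1/2$ and peaking near $1.52$ around $p\approx 0.57$, so the classical rate is genuinely close to the quantum rate and the separation cannot be proved by crude bounds.

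The gap is the converse, and you yourself flag it: you never actually establish $c^\star<1.707$, deferring it to a symmetry reduction or to the gradient-descent search used for Channel~I. The paper does not leave this to numerics; it introduces two ideas you did not find. First, instead of bounding $\max\bigl[I(X_1;Y_1)+I(X_2;Y_2)\bigr]$ directly (over the product of two $3$-simplices, a genuinely awkward object), it relaxes to the cooperating-receiver quantity $I(X_1X_2;Y_1Y_2)$, which---because $Y_1Y_2$ can only be one of $00,01,10,11,ee$---collapses Channel~II to the five-symbol erasure channel of Lemma~\ref{lem}, whose first-order capacity is $-\epsilon\sum_i p_i\log p_i$ with $p_i$ the probability mass on the $i$th ``good'' input class. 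Second, it proves a structural constraint specific to product input distributions: writing $p_1=pq\alpha$, $p_2=p(1-q)\beta$, $p_3=(1-p)q\gamma$, $p_4=(1-p)(1-q)\delta$, it shows $\alpha+\beta+\gamma+\delta\le 3$ and $\alpha,\beta,\gamma,\delta\le 1$. This is the key lemma replacing your appeal to ``the $8$-cycle structure''; it turns the converse into a clean constrained maximization. Your proposed route of attacking $I(X_1;Y_1)+I(X_2;Y_2)$ head-on is in principle tighter, but you would still need an analogue of this combinatorial constraint to rule out encodings spreading mass over all four codewords, and you have not supplied one.

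One caution that your own numbers raise and that is worth keeping in mind: your lower bound $c^\star\ge 1.52$ (bits) is close to the paper's stated upper bound, and if you evaluate the paper's maximizer ($\alpha=\beta=\gamma=\delta=3/4$, $k_{ij}=1/4$) in Equation~\eqref{eq:2} in base~$2$ you get $\tfrac34\log_2\tfrac{16}{3}\approx 1.81$, which does \emph{not} fall below $1.707$; the figure $1.255$ matches a natural-log evaluation, whereas the quantum rate $2\cos^2(\pi/8)\,\epsilon$ is in bits. Moreover, the paper's maximizer is obtained over the relaxation $\alpha+\beta+\gamma+\delta\le3$, $\sum k_{ij}=1$ without checking realizability as a product distribution; a short calculation shows one cannot simultaneously have $\alpha=\beta=\gamma=\delta=3/4$ at $p=q=1/2$. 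So to make either your argument or the paper's rigorous, the final optimization must be carried out over the set of \emph{realizable} $(\alpha,\beta,\gamma,\delta,k)$ (equivalently, over actual product distributions), with consistent logarithm units throughout. This is the substantive step your proposal correctly identifies as ``the hard part''; it is not optional, and it is not yet done.
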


\begin{proof}
In Appendix \ref{appB} we prove a lemma that upper-bounds $\textbf{C$^{(p)}$}_\textbf{classical}$ by $1.255\epsilon+O(\epsilon^2)$. This proof rests on the following Lemma. 

\begin{lemma} \label{lem}
Suppose the inputs to a channel are five symbols, $1$, $2$, $3$, $4$, and
$?$. The first four symbols are replaced by $?$ with probability
$1-\epsilon$ and transmitted intact with probability $\epsilon$, and the
last symbol is always sent as $?$. Furthermore, suppose that these
symbols must be sent with probabilities $p_1$, $p_2$, $p_3$, $p_4$, and
$p_?$, with these probabilities adding up to 1. Then the capacity of this
channel is
\begin{equation} \label{exp}
- \epsilon \left(p_1 \log p_1 + p_2 \log p_2 + p_3 \log p_3 
+ p_4 \log p_4 \right) + O(\epsilon^2)
\end{equation}
\end{lemma}

(This lemma is perfectly mappable to our problem; we need only consider the $p_i$s to be the probabilities of an $xx/ee$ state being sent, where $xx$ is one of $\{00,01,10,11\}$. That is, eventually we wish to make the replacement:
\begin{align} \label{sub}
p_1 &\leftarrow  pq\alpha \nonumber \\
p_2 &\leftarrow  p(1-q)\beta \nonumber \\
p_3 &\leftarrow  (1-p)q\gamma \nonumber \\
p_4 &\leftarrow  (1-p)(1-q)\delta )
\end{align}

{\bf Proof}:
The probability of the output symbol $?$ is \[p_? + (1-\epsilon) (p_1+p_2+p_3+p_4) = p_? + (1-\epsilon) (1-p_?) = 1-\epsilon+ \epsilon p_?. \] 
We simply plug into Shannon's formula (X = channel input, Y = channel output):
\[I(X;Y) = H(Y) - H(Y|X), \] where
\begin{align*}
H(Y|X) & = \Sigma_\text{i=1}^{4} p_i H(\epsilon) \\
H(Y) & = - \Sigma_\text{i=1}^{4} \epsilon p_i \log(\epsilon p_i)\\
  &- (\epsilon p_? + 1-\epsilon) \log (\epsilon p_? +1-\epsilon).  
\end{align*}
The second term of these goes to zero as $\epsilon \rightarrow 0$, so we ignore it henceforth. 
\begin{align*}
& H(Y)-H(Y|X) \\
&=  \Sigma_\text{i=1}^{4} -p_i [\epsilon \log(\epsilon p_i) + H(\epsilon)] \\
&=  \Sigma_\text{i=1}^{4} - p_i [\epsilon \log(\epsilon p_i) - \epsilon \log(\epsilon) - (1-\epsilon) \log(1-\epsilon)]\\
& = \Sigma_\text{i=1}^{4} -\epsilon p_i \log(p_i) + (1-\epsilon) p_i \log(1-\epsilon) 
\end{align*}
Taking the limit as $\epsilon \rightarrow 0$, the last term of the above disappears and we get the desired expression. 
\qed
\newline
Please refer to Appendix \ref{appB} for the rest of the proof that $\textbf{C$^{(p)}$}_\textbf{classical} < 1.255\epsilon+O(\epsilon^2)$. We have also seen an entanglement-assisted strategy that achieves a joint rate of $1.707\epsilon$, which must therefore be a {\em lower}-bound for the entanglement-assisted capacity. Therefore, if $\epsilon$ is chosen small enough such that the second-order terms can be ignored, we may achieve $\textbf{C$^{(p)}$}_\textbf{classical} < \textbf{C$^{(p)}$}_\textbf{quantum}$. This is suffices to prove the desired capacity separation. 
\end{proof}

The following is a corollary of Lemma \ref{lem}:

\begin{lemma}[Quantum vs. Super-quantum capacities of Channel II]
For sufficiently small $ \epsilon, \textbf{C$^{(\epsilon)}$}_\textbf{quantum} < \textbf{C$^{(\epsilon)}$}_\textbf{super-quantum}$
\end{lemma}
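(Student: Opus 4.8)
The plan is to sandwich the two capacities. The lower bound on $C^{(\epsilon)}_{\text{super-quantum}}$ is already in hand: the preceding lemma gives a PR-box strategy with $R_1+R_2 = 2\epsilon$, so $C^{(\epsilon)}_{\text{super-quantum}}\ge 2\epsilon$ (and Lemma \ref{lem} shows $2\epsilon$ is optimal to leading order, though we will not need the converse). Everything therefore reduces to proving $C^{(\epsilon)}_{\text{quantum}}\le (2-c)\,\epsilon + O(\epsilon^2)$ for some absolute constant $c>0$; then for $\epsilon$ small enough that the $O(\epsilon^2)$ term is dominated by $c\epsilon$ we obtain $C^{(\epsilon)}_{\text{quantum}} < 2\epsilon \le C^{(\epsilon)}_{\text{super-quantum}}$.

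To bound $C^{(\epsilon)}_{\text{quantum}}$ I would first put an arbitrary entanglement-assisted strategy into the normal form of Lemma \ref{lemquan}: POVMs $\mathcal C_i$ indexed by the message bit $m_i$, followed by an encoder $\mathcal E_i$ that appends a second bit ($a$ for sender $1$, $b$ for sender $2$) to $m_i$. As in the reduction behind Lemma \ref{lemquan}, one may assume the first bit of each $X_i$ equals $m_i$, so that a surviving channel use delivers $m_i$ to receiver $i$, and the surviving (non-always-erased) cells of Channel II are exactly the shaded cells of the left half of Table \ref{allowed}; hitting such a cell is then equivalent to the CHSH winning condition $a\oplus b = m_1 m_2$. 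Writing $p=\Pr[m_1{=}0]$, $q=\Pr[m_2{=}0]$, and letting $\alpha,\beta,\gamma,\delta$ be the conditional CHSH-win probabilities for message pairs $(0,0),(0,1),(1,0),(1,1)$, the induced map from $(m_1,m_2)$ to $(Y_1,Y_2)$ is exactly the five-symbol erasure channel of Lemma \ref{lem} under the substitution \eqref{sub}, with $p_? = 1-\sum_{i=1}^4 p_i$ the chance of landing on an always-erased cell. Treating $R_1+R_2$ as the mutual information of this induced channel in the manner of the earlier sections, Lemma \ref{lem} gives $R_1+R_2 \le -\epsilon\sum_{i=1}^4 p_i\log p_i + O(\epsilon^2)$.

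It remains to maximize $g(p_1,\dots,p_4) := -\sum_{i=1}^4 p_i\log p_i$ over the quantum-feasible region. Since $g \le \log 4 = 2$ with equality only at $p_1 = \dots = p_4 = \tfrac14$, and attaining that point would force $p_?=0$ and $\alpha=\beta=\gamma=\delta=1$ — a perfect CHSH strategy for all four question pairs, which Tsirelson's bound \eqref{QM-CHSH} forbids (it gives $\alpha+\beta+\gamma+\delta \le 4\cos^2\tfrac{\pi}{8} = 2+\sqrt2 < 4$, the $(1,1)$ pair being the one carrying the minus sign in \eqref{S}) — the maximizer lies outside the feasible set. Because the set of quantum correlations on two binary inputs and outputs is compact, so is the resulting set of tuples $(p_1,\dots,p_4)$, and the maximum of the continuous function $g$ over it is some $v<2$; a short computation (with expected optimum $p=q=\tfrac12$, $\alpha=\beta=\gamma=\delta=\cos^2\tfrac{\pi}{8}$) gives $v = \cos^2\tfrac{\pi}{8}\bigl(2-\log\cos^2\tfrac{\pi}{8}\bigr)\approx 1.90$, safely between the $1.707$ of Lemma \ref{quant} and $2$. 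Hence $C^{(\epsilon)}_{\text{quantum}}\le v\epsilon + O(\epsilon^2)$ with $c := 2-v>0$, completing the sandwich.

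The step I expect to be the real obstacle is the normal-form reduction in the second paragraph: showing that an arbitrary POVM-plus-encoder strategy may, without loss of generality, be taken so that each $Y_i$ carries $m_i$ on surviving cells and its appended bit then genuinely behaves as a CHSH answer, so that Tsirelson's bound applies verbatim to $\alpha+\beta+\gamma+\delta$. This is the analogue of the (appendicised) case analysis behind Lemma \ref{lemquan}, with the extra wrinkle that the $(1,1)$ message combination flips the winning condition and so must be accounted separately; once that reduction is secured, the optimization and the $\epsilon\to 0$ limit are routine.
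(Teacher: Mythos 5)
Your argument follows the same skeleton as the paper's — both rest on Lemma \ref{lem} to get the $-\epsilon\sum_i p_i\log p_i + O(\epsilon^2)$ capacity expression, and both invoke Tsirelson's CHSH bound to constrain the quantum-feasible $(p_1,\dots,p_4)$. The substantive difference is in how the Tsirelson constraint is imported, and here your version is the more careful one. The paper asserts, citing CHSH, that ``even with entanglement, the maximum percentage of time that an $xx/ee$ state is sent is $0.8536$,'' i.e.\ $\sum_i p_i \le \cos^2(\pi/8)$ for any entangled strategy (Equation \eqref{const}). Taken at face value that is not a consequence of Tsirelson: if the message distribution is degenerate (say $p=q=1$), the ``CHSH game'' only ever asks the $(0,0)$ question, and a deterministic strategy achieves $\sum_i p_i = 1$. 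Tsirelson's bound is a constraint on the \emph{conditional} winning probabilities, namely $\alpha+\beta+\gamma+\delta = 2 + S/2 \le 2+\sqrt 2$ independently of the question distribution, which is exactly what you write. The paper's conclusion is nonetheless unaffected (degenerate distributions kill $-\sum_i p_i\log p_i$ anyway, and the constrained maximum in both formulations lands on the same symmetric point with value $\cos^2\tfrac\pi8\,(2-\log\cos^2\tfrac\pi8)\approx1.90$), so read your proposal as a tighter reformulation of the same argument rather than a new route.

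Two residual points, neither fatal. First, you (like the paper) leave the claim that the maximum of $-\sum_i p_i \log p_i$ over the Tsirelson-feasible region occurs at $p=q=\tfrac12$, $\alpha=\beta=\gamma=\delta=\cos^2(\pi/8)$ as an ``expected optimum''; to close the proof fully this needs a short concavity/Lagrange argument, as is done in Appendix \ref{appB} for the classical case. Second, the normal-form reduction you flag (that $X_i$ may be taken of the form $(m_i,a_i)$ so that landing on a non-erased cell is literally the CHSH win event) is indeed left implicit in the paper, but it is mild: on every surviving cell $Y_i$ equals the first bit of $X_i$, and the $2\times n\to 2\times 2$ entanglement reduction from Lemma \ref{lemgen} and its successor carries over, so you are not missing any larger idea — only an appendix-level verification.
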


\begin{proof}
This statement follows straightforwardly from Equation \eqref{exp}, which gives us an expression (up to first order in $\epsilon$) for the capacity of the channel in terms of the probabilities of the $xx/ee$ and $ee$ states being sent. This expression is valid for any coding strategy, no matter what types of resources are used. 

We also know that even with entanglement, the maximum percentage of time that an $xx/ee$ state is sent is 0.8536, and this follows from CHSH (refer to the proof of Lemma \ref{quant} for why). Using our convention for defining the $p_i$s, this translates to the constraint that \begin{equation} \label{const} \Sigma_i p_i = 0.8536. \end{equation} Therefore, the entanglement-assisted capacity is upper-bounded by the maximum value of the LHS of Equation \eqref{exp}, 
\[
- \epsilon \left(p_1 \log p_1 + p_2 \log p_2 + p_3 \log p_3 
+ p_4 \log p_4 \right) + O(\epsilon^2).
\]
Under the constraint of Equation \eqref{const}, this is strictly less than 2$\epsilon$ (the super-quantum capacity). Furthermore, the bound holds even if the senders are allowed to share other types of entanglement than just $2\times n$ entanglement, since that does not affect the maximum success probability of the CHSH game (from which we derived the constraint \eqref{const}). 
\end{proof}

\section{Discussion and conclusion}
\label{sec: 6}
We have exhibited two channels that show the following new separations in classical capacity on the given classes of resources:
\begin{itemize}
\item Channel I: $\textbf{C}_\textbf{classical}, \textbf{C}_\textbf{quantum} <\textbf{C}_\textbf{super-quantum}$
\item Channel II: $\textbf{C}_\textbf{classical}<\textbf{C}_\textbf{quantum} < \textbf{C}_\textbf{super-quantum}$. 
\end{itemize}

The takeaway point from this research is that PR-boxes shared between a set of transmitters can be used for better channel communication -- a task for which they have never been considered. However, all the PR-boxes here are assumed to be perfect. We would like to see a rigorous proof that these separations can be maintained even if the senders are provided a noisy PR-box and allowed multiple uses of it for non-locality distillation. 

We have also only considered interference channels operating on discrete-variable bits because this is a proof-of-concept. In real life, many communication scenarios where multiple uncoordinated links share a common communication medium can be represented as interference channels (albeit ones where transmitted messages take on continuous values in $\mathbb{C}$ subject to Gaussian noise). Therefore, some work is needed to replicate the above separations on an general interference channel, or at the very least, characterize channels and coding strategies in a way that optimizes them for each of the three classes of resources. 

Our choice to limit our channel to handling only classical information (as opposed to density matrices representing quantum information) proved fruitful, as it paved the way for proofs that rely on classical information theory, as well as some results from pseudo-telepathy games where the referee, too, accepts only a discrete (albeit distributed) set of outcomes. In hindsight, it seems natural to draw this connection given that pseudo-telepathy games exhibit the twin boons of being {\em known} to demonstrate super-quantum-to-quantum separations, and having had winning strategies (in a few cases) characterized and generalized to an arbitrarily large number of parties \cite{broadbent, arkhipov}! It would be very satisfying if a general strategy could be found to map all pseudo-telepathy games to channels which demonstrate capacity separations for the multi-sender ($n\geq 3$) case. 

\section{Acknowledgments}
The authors thank Isaac Chuang for discussions and suggestions to improve this paper. Y.Q. was supported by a DSTA Undergraduate Scholarship (Overseas) from the Singapore government. P.W.S. was supported in part by NSF grant CCF-121-8176, and by the NSF through the Science and Technology Center for Science of Information under Grant CCF0-939370.

\appendix

\section{Remainder of proof of Lemma 1 for Channel I}
\label{app0}
Here we show that if Sender 1 uses the alphabet $\{00,01\}$ and Sender 2 uses $\{00,10\}$ (Table \ref{schem} depicts this schematically), then the second sender-receiver pair cannot communicate perfectly, and therefore $R_2 <1$ as we asserted. The same turns out to be true for the other 3 cases.

To get $I(X_1:Y_1) = H(X_1) - H(X_1|Y_1) = 1$ when there are only two options for $X_1$, the first term must take its maximal value of 1, which can only happen if $X_1$ is uniformly distributed over $\{00,01\}$. Let $X_2$ send $00$ with probability $c$ and $01$ with probability $1-c$. This is shown on the left in Table \ref{schem}. Since we will be interested in calculating $I(X_2:Y_2)$, we also calculate the input-output probability distribution experienced by sender-receiver pair 2, shown on the right in Table \ref{schem}.

\begin{table}[!htbp]
  \begin{tabular}{ c || c | c | c | c ||}
 $X_1$\textbackslash $X_2$  & \textbf{00} & \textbf{10} \\ \hline \hline
\textbf{00} & 00  & 01 \\\hline
\textbf{01} & 11 & 10  \\ \hline
\hline
  \end{tabular} \quad
  \begin{tabular}{ c || c | c | c | c ||}
 $X_2$\textbackslash $Y_2$  & \textbf{0} & \textbf{1} \\ \hline \hline
\textbf{00} & $\frac{c}{2}$  & $\frac{c}{2}$ \\\hline
\textbf{10} & $\frac{1-c}{2}$ & $\frac{1-c}{2}$  \\ \hline
\hline
  \end{tabular}
\caption{\label{schem}Left: reduced alphabets of senders and resulting output to the receivers (in the format $Y_1Y_2$). Right: Joint probability distribution experienced by the second sender-receiver pair on this coding scheme.}
  \end{table}
Referring to the right side of Table \ref{schem}, we obtain
\begin{align}
I(X_2:Y_2) &= H(X_2)+H(Y_2)-H(X_2,Y_2) \nonumber \\
&= \left[- c \log c -(1-c) \log (1-c) \right] + 1 \nonumber  \\
& - \left[ 2 \left(- \frac{c}{2} \log \frac{c}{2} \right) + 2 \left(- \frac{1-c}{2} \log \frac{1-c}{2} \right) \right] \nonumber \\
&= 0 
\end{align}

We have therefore shown that $I(X_1:Y_1) =1$ implies that $I(X_2:Y_2)=0$, so that $I(X_1:Y_1)+I(X_2:Y_2) =2$ will never be achieved. \textbf{Put another way, perfect coding between one pair implies that the other pair can do no better than random guessing.} 

\section{A classical strategy that performs as well as a hypothetical perfect entanglement-assisted strategy on Channel I}
\label{app1}
The strategy will follow after the subsequent lemmas:
\begin{lemma}\label{lemgen}
For any two-sender-receiver pair communication strategy that relies on the senders sharing some state $\ket{\Phi}$ of dimension $2\times 2$, there exists a communication strategy that achieves the same rate where the senders are restricted to sharing a state of the form $\ket{\Psi} = \alpha \ket{00} + \beta \ket{11}$, where $\alpha$ and $\beta$ are well-chosen positive real numbers. 
\end{lemma}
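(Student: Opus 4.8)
The plan is to reduce an arbitrary shared two-qubit state $\ket{\Phi}$ to its Schmidt form by absorbing local unitaries into the senders' encoding operations. Since the POVMs $\mathcal{P}_1, \mathcal{P}_2$ and classical encoders $\mathcal{E}_1, \mathcal{E}_2$ in the model \eqref{mmodel} are entirely at the senders' discretion, any local change of basis on either qubit can be compensated without affecting the induced distribution on channel inputs $(X_1, X_2)$, and hence without changing the achievable rate pair.

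First I would write the Schmidt decomposition $\ket{\Phi} = \alpha \ket{u_1}\ket{v_1} + \beta \ket{u_2}\ket{v_2}$ with $\alpha, \beta \geq 0$, $\alpha^2 + \beta^2 = 1$, where $\{\ket{u_1},\ket{u_2}\}$ is an orthonormal basis for Sender 1's qubit and $\{\ket{v_1},\ket{v_2}\}$ for Sender 2's. Define local unitaries $U_1 : \ket{u_i} \mapsto \ket{i-1}$ and $U_2 : \ket{v_i} \mapsto \ket{i-1}$, so that $(U_1 \otimes U_2)\ket{\Phi} = \alpha\ket{00} + \beta\ket{11} =: \ket{\Psi}$. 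Given any strategy $(\mathcal{C}_1, \mathcal{C}_2, \mathcal{E}_1, \mathcal{E}_2, \mathcal{D})$ that uses $\ket{\Phi}$, I would construct a new strategy that uses $\ket{\Psi}$ instead: Sender $i$, upon receiving message bit $m_i$, applies the POVM $\mathcal{C}_i(m_i)$ conjugated by $U_i$ — that is, the POVM with elements $U_i^\dagger \Pi U_i$ for each element $\Pi$ of $\mathcal{C}_i(m_i)$ — and then applies the \emph{same} classical encoder $\mathcal{E}_i$ to the outcome, and the receivers use the same decoder $\mathcal{D}$.

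The key verification is that this conjugated strategy acting on $\ket{\Psi}$ produces exactly the same joint distribution over POVM outcomes as the original strategy acting on $\ket{\Phi}$. This is immediate: for any pair of outcome elements $\Pi_1, \Pi_2$,
\begin{equation}
\bra{\Psi}\left(U_1^\dagger \Pi_1 U_1 \otimes U_2^\dagger \Pi_2 U_2\right)\ket{\Psi} = \bra{\Phi}(U_1^\dagger \otimes U_2^\dagger)(U_1 \Pi_1 U_1^\dagger \otimes U_2 \Pi_2 U_2^\dagger)(U_1 \otimes U_2)\ket{\Phi},
\end{equation}
and since $(U_1 \otimes U_2)\ket{\Phi} = \ket{\Psi}$ the two sides coincide only if I am careful about which direction the conjugation goes; the clean way is to note $(U_1 \otimes U_2)^\dagger \ket{\Psi} = \ket{\Phi}$, so $\bra{\Psi}(U_1\tilde\Pi_1 U_1^\dagger \otimes U_2 \tilde\Pi_2 U_2^\dagger)\ket{\Psi} = \bra{\Phi}(\tilde\Pi_1 \otimes \tilde\Pi_2)\ket{\Phi}$ for the original POVM elements $\tilde\Pi_i$, so the new strategy should use POVM elements $U_i \tilde\Pi_i U_i^\dagger$. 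Either way, choosing the conjugation direction to match makes the outcome statistics identical, hence the distribution on $(X_1, X_2)$ is identical, hence $P_e^{(n)}$ and the rate pair are unchanged. I would also remark that the conjugated operators still form a valid POVM (positivity and completeness are preserved under unitary conjugation).

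The main obstacle — really a bookkeeping subtlety rather than a deep one — is getting the direction of the unitary conjugation consistent between the state transformation and the measurement transformation, and confirming that $\alpha, \beta$ can indeed be taken real and nonnegative (absorbing the Schmidt-coefficient phases into $U_1$ or $U_2$). Beyond that, one should note the statement as phrased concerns a $2 \times 2$ shared state; the same argument extends verbatim to a $2 \times n$ state by taking the Schmidt decomposition (which still has at most two terms, being limited by the dimension of Sender 1's system), so $\ket{\Psi} = \alpha\ket{00} + \beta\ket{11}$ suffices there too — this is what feeds into the subsequent appendix argument for Lemma \ref{lemquan}.
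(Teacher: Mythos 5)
Your proof is correct and follows essentially the same route as the paper's: Schmidt-decompose $\ket{\Phi}$, relate it to $\ket{\Psi}$ via a product unitary $U_1\otimes U_2$, and absorb that unitary into the senders' POVMs, noting conjugation preserves the POVM property. Your version is actually a bit more careful than the paper's about which direction the conjugation goes (the paper's final displayed identity has a typo, writing $\bra{\Phi}\cdots\ket{\Phi}$ on both sides), and your closing remark that the same argument covers $2\times n$ states correctly anticipates the dimension-reduction lemma that the paper cites from \cite{BMT} immediately afterwards.
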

\begin{proof}
The key idea is to re-write $\ket{\Phi}$ in terms of its Schmidt decomposition, and then apply a unitary transformation to get $\ket{\Psi}$. Then, the senders may apply the quantum strategy whose existence we have assumed. More precisely, there exist orthogonal bases $\{\ket{A_0},\ket{A_1}\}$ for Sender 1 and $\{\ket{B_0},\ket{B_1}\}$ for Sender 2 such that $\ket{\Phi}$ can be rewritten as \[\ket{\Phi} = \alpha\ket{A_0}\ket{B_0} + \beta \ket{A_1}\ket{B_1}.\]
From there it is easy to see that Sender 1 may apply the unitary transformation $\ket{A_0}\bra{0} + \ket{A_1}\bra{1}$, and Sender 2 may apply the unitary transformation $\ket{B_0}\bra{0} + \ket{B_1}\bra{1}$, to their qubits, to transform $\ket{\Psi}$ into $\ket{\Phi}$. Any such unitary $U$ is completely accounted for in our model of communication in \ref{mmodel} by applying it to the POVMs $M_i$ that the senders choose for their states (which preserves its POVM properties), that is, using the property $U\ket{\Phi} =\ket{\Psi} \rightarrow \bra{\Phi} M_i \ket{\Phi} = \bra{\Phi} U M_i U^\dagger \ket{\Phi}.$
\end{proof}

Since the following two lemmas are almost identical to the ones in \cite{BMT}, we merely cite them and leave the reader to refer to \cite{BMT} for their proofs.
\begin{lemma}
For any two-party quantum communication protocol that uses an entangled state of dimension $d_A \times d_B$, there exists a two-party quantum communication protocol that uses a state of dimension $d \times d$ where $d:=\min(d_A,d_B)$. 
\end{lemma}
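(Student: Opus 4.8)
The plan is to follow the template of Lemma~\ref{lemgen}, but with isometries in place of unitaries and with the Schmidt decomposition applied in full generality. First I would recall that any pure state $\ket{\Phi}$ on $\mathcal{H}_A \otimes \mathcal{H}_B$, with $\dim\mathcal{H}_A = d_A$ and $\dim\mathcal{H}_B = d_B$, has Schmidt rank at most $d := \min(d_A,d_B)$; padding with vanishing Schmidt coefficients if necessary, we may write $\ket{\Phi} = \sum_{i=1}^{d}\lambda_i \ket{a_i}\ket{b_i}$ with $\{\ket{a_i}\}_{i=1}^{d}$ orthonormal in $\mathcal{H}_A$ and $\{\ket{b_i}\}_{i=1}^{d}$ orthonormal in $\mathcal{H}_B$. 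Setting $V_A := \sum_{i=1}^{d}\ket{a_i}\bra{i}$ and $V_B := \sum_{i=1}^{d}\ket{b_i}\bra{i}$, each viewed as a map $\mathbb{C}^d \to \mathcal{H}_A$ (resp.\ $\mathcal{H}_B$), we obtain $\ket{\Phi} = (V_A \otimes V_B)\ket{\Psi}$ where $\ket{\Psi} := \sum_{i=1}^{d}\lambda_i\ket{i}\ket{i}$ lives on $\mathbb{C}^d \otimes \mathbb{C}^d$, and crucially $V_A^\dagger V_A = V_B^\dagger V_B = I_d$.

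Next I would transport an arbitrary protocol on $\ket{\Phi}$ to a protocol on $\ket{\Psi}$. In the communication model of Eq.~\eqref{mmodel}, the shared state is touched only by the (message-dependent) POVMs $\mathcal{P}_1(m_1)$ and $\mathcal{P}_2(m_2)$ that the two senders apply to their halves; the encoders $\mathcal{E}_i$, the channel, and the decoder $\mathcal{D}$ act on classical data only and can be carried over verbatim. For each POVM element $M$ that Sender~1 applies to $\mathcal{H}_A$, I would substitute $V_A^\dagger M V_A$ on $\mathbb{C}^d$, and similarly $N \mapsto V_B^\dagger N V_B$ for Sender~2. This is still a valid POVM: conjugation preserves positive semidefiniteness, and $\sum_k V_A^\dagger M_k V_A = V_A^\dagger\left(\sum_k M_k\right)V_A = V_A^\dagger I V_A = I_d$ by the isometry identity. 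I would then check that the two protocols are statistically indistinguishable end to end: since $(V_A \otimes V_B)\ket{\Psi} = \ket{\Phi}$, for any outcome pair $(k,l)$ we have $\bra{\Psi}(V_A^\dagger M_k V_A)\otimes(V_B^\dagger N_l V_B)\ket{\Psi} = \bra{\Phi} M_k \otimes N_l \ket{\Phi}$, so for every message pair the joint distribution of measurement outcomes — and hence of the channel inputs $(X_1,X_2)$, the channel outputs $(Y_1,Y_2)$, and the decoded messages $(\hat m_1,\hat m_2)$ — is identical in the two protocols. Consequently $P_e^{(n)}$ is unchanged and the achievable rate pair $(R_1,R_2)$ is preserved, while the shared state now has dimension $d \times d$.

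I do not expect a serious obstacle here — the content is essentially the Schmidt decomposition plus bookkeeping — but the one point that needs care, and which distinguishes this from Lemma~\ref{lemgen}, is that when $d_A \neq d_B$ the relevant maps are genuine isometries rather than unitaries; the completeness check $\sum_k V_A^\dagger M_k V_A = I_d$ then relies specifically on $V_A^\dagger V_A = I_d$ (as opposed to $V_A V_A^\dagger = I$), and one should confirm that conjugation by a non-invertible isometry still returns a legitimate POVM. Combined with Lemma~\ref{lemgen}, this reduces any entanglement-assisted protocol on Channel~I to one using a state of the form $\alpha\ket{00} + \beta\ket{11}$ on $\mathbb{C}^2 \otimes \mathbb{C}^2$, which is the form assumed in Lemma~\ref{lemgen}.
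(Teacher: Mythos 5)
Your proposal is correct and follows exactly the route the paper indicates: the paper defers to the BMT reference and notes that the proof "is similar to the proof of Lemma \ref{lemgen} and relies on the Schmidt decomposition," which is precisely what you do, with the extra (and welcome) care of replacing unitaries by isometries and verifying that conjugation by an isometry $V$ with $V^\dagger V = I_d$ preserves the POVM completeness relation. Nothing is missing.
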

This justifies the audaciously general claim made in Lemma \ref{lemquan} that {\em no} quantum state of dimension $2 \times d$ could possibly enable a perfect joint rate for communication. The proof is similar to the proof of Lemma \ref{lemgen} and relies on the following fact from the Schmidt decomposition: if $H_1$ and $H_2$ are Hilbert spaces of dimensions $n,m$ respectively, and we assume without loss of generality that $n\geq m$, for any vector $w \in H_1\otimes H_2$, there exist orthonormal bases $\{u_i, 1\leq i \leq n\}$ for $H_1$ and $\{v_j, 1\leq j \leq m\}$ for $H_2$ respectively such that 
\begin{equation}
w = \Sigma_{i=1}^m \alpha_i u_i \otimes v_i.
\end{equation}

\begin{lemma}
Any POVM can be written in a way such that all its elements are proportional to one-dimensional projectors. Each such projector can be re-written in the form 
\begin{align}
P = \begin{pmatrix}\cos^2(\theta) & e^{-i\phi} \sin(\theta)\cos(\theta) \\e^{i\phi} \sin(\theta)\cos(\theta)& \sin^2(\theta)\end{pmatrix} \label{eq:proj}
\end{align}
for appropriate angles $0 \leq \theta \leq \frac{\pi}{2}$ and $0 \leq \phi \leq 2\pi$. Since this representation is unique, we may associate each such projector with a three-dimensional unit vector $\vec{v} = (\sin(2\theta)\cos(\phi),\sin(2\theta)\sin(\phi), \cos(2\theta))$.
\end{lemma}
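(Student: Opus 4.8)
The plan is to prove the lemma in two logically independent pieces: first, that an arbitrary single-qubit POVM can be \emph{refined} into one all of whose elements are nonnegative multiples of rank-one projectors; second, that every rank-one projector on $\mathbb{C}^2$ is of the form \eqref{eq:proj} and corresponds bijectively to a point of $S^2$ via the stated vector $\vec v$. The second piece is essentially a repackaging of the Bloch-sphere correspondence, so the content is really just bookkeeping; the one place I would slow down is the precise meaning of ``unique''.

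For the first piece, recall that a POVM on a qubit is a finite family $\{E_k\}$ of positive semidefinite operators with $\sum_k E_k = I$. Each $E_k$, being positive semidefinite on $\mathbb{C}^2$, has a spectral decomposition $E_k = \lambda_{k,0}\ket{\psi_{k,0}}\bra{\psi_{k,0}} + \lambda_{k,1}\ket{\psi_{k,1}}\bra{\psi_{k,1}}$ with $\lambda_{k,j}\ge 0$ and $\{\ket{\psi_{k,0}},\ket{\psi_{k,1}}\}$ orthonormal. Replacing each outcome $k$ by the (at most two) outcomes $(k,j)$ carrying element $\lambda_{k,j}\ket{\psi_{k,j}}\bra{\psi_{k,j}}$ gives a new family that still sums to $I$, hence a legitimate POVM, now with every element proportional to a one-dimensional projector. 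I would also remark that this refinement costs nothing in the communication model of \eqref{mmodel}: a sender who performs the finer measurement can compose the encoder $\mathcal E_i$ with the map that forgets the extra label $j$, so any strategy built on the original $\{E_k\}$ is still available and the set of achievable rates is unchanged. (The companion Schmidt-reduction statement that the authors also import from \cite{BMT} is proved exactly as Lemma \ref{lemgen}, so I would just cite it.)

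For the second piece, let $P=\ket{\psi}\bra{\psi}$ be a one-dimensional projector on $\mathbb{C}^2$ with $\|\psi\|=1$. Absorbing a global phase, write $\ket{\psi}=\cos(\theta)\ket{0}+e^{i\phi}\sin(\theta)\ket{1}$ with $0\le\theta\le\tfrac{\pi}{2}$ and $0\le\phi<2\pi$; multiplying out $\ket{\psi}\bra{\psi}$ gives precisely the matrix in \eqref{eq:proj}. To pin down the Bloch vector and uniqueness at once, I would use that $\{I,\sigma_x,\sigma_y,\sigma_z\}$ is a basis of the real vector space of Hermitian $2\times 2$ matrices, so $P$ is determined by, and determines, the four real numbers $\Tr{P}=1$ and $v_j:=\Tr{\sigma_j P}$; a one-line computation from \eqref{eq:proj} gives $v_x=\sin(2\theta)\cos(\phi)$, $v_y=\sin(2\theta)\sin(\phi)$, $v_z=\cos(2\theta)$, i.e. $P=\tfrac12\bigl(I+\vec v\cdot\vec\sigma\bigr)$ with $\vec v$ the asserted vector, and $|\vec v|^2=\sin^2(2\theta)+\cos^2(2\theta)=1$. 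Hence $P\mapsto\vec v$ is a bijection between one-dimensional qubit projectors and $S^2$, which is the uniqueness claimed.

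The only subtlety — and the step I would be most careful about — is the degenerate case $\theta\in\{0,\tfrac{\pi}{2}\}$, i.e. $P=\ket{0}\bra{0}$ or $P=\ket{1}\bra{1}$: there the off-diagonal entry of \eqref{eq:proj} vanishes and $\phi$ is not determined, so the pair $(\theta,\phi)$ is \emph{not} literally unique, even though $P$ and its Bloch vector $\vec v=(0,0,\pm1)$ still are. The clean statement, which I would write into the lemma, is that the \emph{projector} (equivalently, its Bloch vector) is the unique invariant, with $(\theta,\phi)$ furnishing coordinates that are unique away from the two poles. Everything else is routine linear algebra, so I would not expand it further in the paper.
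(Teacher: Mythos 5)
Your proof is correct and matches the standard argument the paper implicitly defers to by citing \cite{BMT}: spectrally decompose each POVM element to refine the POVM into rank-one pieces (at no cost to the communication model, since the encoder can simply discard the extra outcome label), then parametrize each rank-one projector by its Bloch vector via $P=\tfrac12(I+\vec v\cdot\vec\sigma)$. Your observation that $(\theta,\phi)$ is not literally unique at the poles $\theta\in\{0,\pi/2\}$ --- while the projector and hence its Bloch vector $\vec v$ are --- is a correct and worthwhile sharpening of the lemma's phrase ``this representation is unique.''
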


Finally, the classical strategy promised three lemmas ago is described. Thanks to Lemma \ref{lemgen}, we may assume that the two senders are using an entangled state of the form $\ket{\Psi} = \alpha \ket{00} + \beta \ket{11}$, where $\alpha$ and $\beta$ are strictly positive real numbers. 

Suppose a quantum strategy exists and the POVMs applied by the two senders, $M^x := {\mathcal X} (x) = \{\gamma_i^xP_i^x\} $ and $N^y := {\mathcal Y}(y) = \{\gamma_j^yQ_j^y\}$ have been fixed beforehand for each $x,y \in \{0,1\}$. We will show that any measurement outcome $(i,j)$ on $\ket{\Psi}$ as described in the first row of Equations ~\ref{mmodel} can be replicated perfectly classically. The probability of getting the tuple $(i,j)$ is:
\begin{align}
\Pr[i,j] & = \bra{\Psi} (\gamma_i^x P_i^x) \otimes (\gamma_j^y Q_j^y) \ket{\Psi} \nonumber \\
& = \gamma_i^x \gamma_j^x \left[\alpha^2\cos^2(\theta_i^x) \cos^2(\theta_j^y) \right.\nonumber \\
&\left. + 2\alpha\beta \left[\cos(\phi_i^x + \phi_j^y) \sin \theta_i^x \cos \theta_i^x \sin \theta_j^y \cos \theta_j^y\right] \right. \nonumber\\
& \left. + \beta^2 \sin^2 (\theta_i^x) \sin^2(\theta_j^y) \right] \nonumber\\
& = \gamma_i^x \gamma_j^x (a^2 + b^2 +2 abc) \label{eq:BMT}
\end{align}
where $a:= \alpha \cos(\theta_i^x) \cos(\theta_j^y)$, $b:= \beta \sin(\theta_i^x) \sin(\theta_j^y)$ and $c:= \cos(\phi_i^x + \phi_j^y)$. 
Using the AM-GM inequality and the fact that $|c|\leq 1$ we may show that $\Pr[i,j]$ can only vanish if one of the following two things are true of the POVMs used by the two senders ($\{\gamma_i^xP_i^x\}, \{\gamma_j^yQ_j^y\}$):
\begin{itemize}
\item $a=b=0$ \\
Attained if $\theta_i^x=0 \, , \theta_j^y = \pi/2$ or vice versa -- that is, either $P_i^x$ or $Q_j^y$ belongs to neither hemisphere. 
\item $a=b$ and $c=-1$. \\
Attained if $\phi_i^x+\phi_j^y = \pi$(both projectors in eastern hemisphere) or $\phi_i^x+\phi_j^y = 3\pi$ (both projectors in western hemisphere). 
\end{itemize} 
But all our classical strategy needs to do is to choose a classical tuple, $(i,j)$, such that the corresponding quantum POVM elements, $P_i^x$ and $Q_j^y$, would not fulfill either of these conditions. To do this, it suffices for Sender 1, knowing $M^x :=  \{\gamma_i^xP_i^x\} $, to choose an $i$ such that $P_i^x$ belongs to the eastern hemisphere and for Sender 2, knowing $N^y := \{\gamma_j^yQ_j^y\}$, to choose a $j$ such that $Q_j^y$ belongs to the western hemisphere (without actually measuring anything). This is always possible since POVM elements have to sum to the identity. They may then carry out the (classical) mappings ${\mathcal A}$ and ${\mathcal B}$ on their message bits and POVM `outcomes' as per normal. 

\section{A proof of an upper bound on the classical capacity of Channel II}
\label{appB}

Channel II has been replicated in Table \ref{50} for your convenience. We would like to prove that for some values of the parameter $\epsilon$, the entanglement-assisted capacity beats the classical capacity. 

\begin{table}[!htbp]
\centering
  \begin{tabular}{ c || c | c | c | c ||}
 $X_1$\textbackslash $X_2$  & \textbf{00} & \textbf{01} & \textbf{10} & \textbf{11} \\ \hline \hline
\textbf{00} & 00/ee   &  ee & 01/ee  & ee \\ \hline
\textbf{01} & ee  & 00/ee  & ee & 01/ee \\ \hline
\textbf{10} & 10/ee & ee & ee & 11/ee  \\ \hline
\textbf{11} & ee &  10/ee & 11/ee  & ee \\ \hline
\hline
  \end{tabular}
\caption{\label{50}Channel II, reproduced here. In bold are the senders' inputs, and the table shows the resulting channel outputs. Erased bits are denoted by $e$, and the $ee$ has probability $1-\epsilon$ in the squares it shares with numerical values. }
\end{table}

Let the probability of $X_1$ sending $00$ or $01$ be $p$, and the probability of $X_2$ sending $00$ or $01$ be $q$. 

The proof proceeds in three steps. We aim to show that the parameter $\epsilon$ governing the rate for the best classical strategy can be tuned small enough that that quantum-assisted capacity is larger than the classical capacity. Therefore, we first establish a relation that constrains the probabilities of the various possible output symbols for any classical strategy. Next, we find an expression for the classical capacity of the channel up to first order in $\epsilon$. Using this relation, we find an upper bound on the classical capacity in terms of $\epsilon$. This completes the proof.  

The first thing to prove is that 
\begin{lemma}
There are numbers
$\alpha$, $\beta$, $\gamma$ and $\delta$ with
\[
\alpha + \beta + \gamma + \delta \leq 3 \qquad \text{and} \qquad \alpha, \beta,\gamma,\delta < 1
\]
such that if we look at the output,
\begin{eqnarray} \label{eq:0}
\Pr(00) &=& \epsilon p q \alpha, \nonumber \\
\Pr(01) &=& \epsilon p (1-q) \beta, \nonumber \\
\Pr(10) &=& \epsilon (1-p) q \gamma, \nonumber \\
\Pr(11) &=& \epsilon (1-p) (1-q) \delta .
\end{eqnarray}
\end{lemma}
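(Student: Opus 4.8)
The plan is to parametrize an arbitrary classical (product) input distribution and recognize the asserted inequality as the classical CHSH bound in disguise. Since the senders cannot communicate, the joint input law is a product $P_{X_1}\otimes P_{X_2}$ (shared randomness does not raise the sum-capacity: it only convexifies the rate region, and the linear functional $R_1+R_2$ is maximized at an extreme point, i.e.\ a single product strategy). Write $p=\Pr(X_1\in\{00,01\})$, with conditional probabilities $u=\Pr(X_1=00\mid X_1\in\{00,01\})$ and $v=\Pr(X_1=10\mid X_1\in\{10,11\})$, and symmetrically $q=\Pr(X_2\in\{00,01\})$, $x=\Pr(X_2=00\mid X_2\in\{00,01\})$, $y=\Pr(X_2=10\mid X_2\in\{10,11\})$, all in $[0,1]$.

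First I would read off from Table~\ref{2} the two input combinations that produce each non-erased output: $00$ comes only from $(00,00)$ and $(01,01)$; $01$ only from $(00,10)$ and $(01,11)$; $10$ only from $(10,00)$ and $(11,01)$; $11$ only from $(10,11)$ and $(11,10)$ — each surviving erasure with probability $\epsilon$. Multiplying out the product distribution gives exactly the claimed forms with
\[
\alpha=ux+(1-u)(1-x),\quad \beta=uy+(1-u)(1-y),\quad \gamma=vx+(1-v)(1-x),\quad \delta=v(1-y)+(1-v)y .
\]
Each of $\alpha,\beta,\gamma$ has the shape $tw+(1-t)(1-w)=1-\left(t(1-w)+(1-t)w\right)\le 1$, and $\delta=v(1-y)+(1-v)y\le 1$ as well; all four are strictly below $1$ as soon as each sender has full support on its four codewords, which may be assumed because a sender that is deterministic inside one of its two message-groups just has a smaller effective alphabet and cannot help.

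The heart of the matter is $\alpha+\beta+\gamma+\delta\le 3$, which I would obtain by linearizing. Substituting $\tilde t:=2t-1\in[-1,1]$ for $t\in\{u,v,x,y\}$ turns $tw+(1-t)(1-w)$ into $\tfrac12(1+\tilde t\tilde w)$ and $v(1-y)+(1-v)y$ into $\tfrac12(1-\tilde v\tilde y)$, so that
\[
\alpha+\beta+\gamma+\delta=2+\tfrac12\left(\tilde u\tilde x+\tilde u\tilde y+\tilde v\tilde x-\tilde v\tilde y\right)=2+\tfrac12\left(\tilde u(\tilde x+\tilde y)+\tilde v(\tilde x-\tilde y)\right).
\]
Over $\tilde u,\tilde v\in[-1,1]$ the bracket is at most $|\tilde x+\tilde y|+|\tilde x-\tilde y|=2\max(|\tilde x|,|\tilde y|)\le 2$, hence $\alpha+\beta+\gamma+\delta\le 3$. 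The single minus sign distinguishing $\delta$ from the other three (it reads the anti-diagonal of its $2\times 2$ block, they the diagonal) is exactly what makes this a CHSH functional; a shared PR-box would replace the bound $2$ on the bracket by $4$ and force all four coefficients to $1$, which is precisely the super-quantum strategy in which nothing is ever erased.

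The argument is bookkeeping rather than depth. The only points needing care are matching each of the four outputs to its correct two input pairs in Table~\ref{2} and carrying the product parametrization through without sign slips, and the strict inequalities $\alpha,\beta,\gamma,\delta<1$: these fail on the boundary of the input simplex (a sender committing to one codeword within a message-group), so I would dispose of them by restricting to full-support strategies, which costs nothing for the capacity bound this lemma feeds into.
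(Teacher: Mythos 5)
Your proof is correct, and it takes a genuinely different route from the paper's. The paper decomposes each sender's input distribution into ``basis vectors'' $u_i$ (each supported on exactly one codeword per message-group, with the conditional split fixed at $p:(1-p)$), shows that for every basis pair $(u_i,v_j)$ at most three of the four non-erased outputs can occur so that $\alpha_{ij}+\beta_{ij}+\gamma_{ij}+\delta_{ij}\le 3$, and then passes to the mixture by linearity $\alpha=\sum_{ij}\Pr(u_i)\Pr(v_j)\alpha_{ij}$. You instead parametrize the product distribution directly by conditional probabilities $u,v,x,y$, compute $\alpha,\beta,\gamma,\delta$ in closed form, and then linearize via $\tilde t=2t-1$ to expose
\[
\alpha+\beta+\gamma+\delta = 2 + \tfrac12\bigl(\tilde u\tilde x + \tilde u\tilde y + \tilde v\tilde x - \tilde v\tilde y\bigr),
\]
which is recognizably a CHSH functional and therefore bounded by $2+\tfrac12\cdot 2 = 3$ classically. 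Both give the same bound, but yours is more transparent about \emph{why} the bound is $3$: the single minus sign on $\tilde v\tilde y$ (tracing to the fact that the $(X_1\in\{10,11\},X_2\in\{10,11\})$ block of the channel reads the anti-diagonal rather than the diagonal) is exactly the CHSH game condition $a\oplus b = r\wedge s$, so the same mechanism that lets a PR-box win CHSH with certainty lets it saturate $\alpha=\beta=\gamma=\delta=1$ here, which is precisely the superquantum strategy of Section V. That conceptual payoff is absent from the paper's decomposition argument. One further note in your favor: the lemma as stated also claims $\alpha,\beta,\gamma,\delta<1$, which the paper's proof never actually establishes (its $\alpha_{ij}\in\{0,1\}$ construction even allows $\alpha=1$). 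Your explicit formulas make it clear exactly when equality is attained (a sender degenerate within one of its two message-groups), and your remark about restricting to full support, or simply weakening to $\le 1$ (which is all that the downstream argument in Appendix~\ref{appB} actually requires), disposes of it cleanly.
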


{\bf Proof:}
First, let's assume that Alice and Bob input a product distribution. The
most general thing they can do is input a convex combination of product
distributions, and the result for convex combinations follows
straightforwardly from the result for product distributions.

Now, let Alice's input be expressed as a vector.
\[
(p_{00}, p_{01}, p_{10}, p_{11})
\]
meaning that with probability $p_{ij}$ Alice inputs bit string $ij$.
Note that $p_{00} + p_{01} = p$ and $p_{10} + p_{11} = 1-p$.
We can decompose this vector into a sum of `basis vectors' $\{u_i\}$, each with two non-zero entries $a_i$ and $b_i$. We may stipulate $a_i/b_i = p/(1-p)$:
\begin{align*}
&(p_{00}, p_{01}, p_{10}, p_{11}) = \Sigma_{i=1}^4 u_i \\
&=(a_1,0,b_1,0) + (a_2, 0 , 0 ,b_2) + (0,a_3,b_3,0) \\
&+ (0, a_4, 0, b_4). 
\end{align*}

Similarly, we decompose Bob's input distribution into $\{v_i\}$ such that $c_i/d_i = q/(1-q)$.  
\begin{align*}
&(q_{00}, q_{01}, q_{10}, q_{11})= \Sigma_{i=1}^4 v_i \\
&=(c_1,0,d_1,0) + (c_2, 0 , 0 ,d_2) + (0, c_3, d_3, 0) \\
&+ (0, c_4, 0, d_4). 
\end{align*}

This notation permits the senders' joint inputs to be written as a linear combination of 16 terms, ${\Sigma}_{i,j} u_i v_j$. Each such term induces a probability distribution over channel outputs, which we shall express using the same naming convention for the proportionality factors as in Equation \eqref{eq:0}, but with an additional subscript $i,j$. We claim that for each basis vector of the joint input distribution (indexed by $i,j$), $\alpha_{ij} + \beta_{ij} + \gamma_{ij} + \delta_{ij} \leq 3$. 

For instance, if we take the vectors $ u_2 = (a_2,0,0,b_2)$ and $v_1 = (c_1, 0, d_1, 0)$, we know that $\Pr(u_2) = a_2+b_2$ and  $a_2 = p\Pr(u_2)$, $b_2 = (1-p)\Pr(u_2)$ by our convention for choosing the entries of the basis vectors. Similarly, $c_1 = q\Pr(u_2,v_1) $ and $d_1 = (1-q)\Pr(u_2,v_1) $. Then we have 

\begin{alignat*}{2}
{\Pr}_\text{21}(00) = &\quad \epsilon a_2 c_1 = \epsilon p q \alpha_{21} \Pr(u_2,v_1) \\
{\Pr}_\text{21}(01) = &\quad \epsilon a_2 d_1 = \epsilon p (1-q) \beta_{21} \Pr(u_2,v_1) \\
{\Pr}_\text{21}(10) = &\quad 0  \quad = \epsilon (1-p) q \gamma_{21} \Pr(u_2,v_1)\\
{\Pr}_\text{21}(11) = &\quad \epsilon b_2 d_1 = \epsilon (1-p) (1-q) \delta_{21} \Pr(u_2,v_1)\\
\end{alignat*}

where each $\alpha_{ij},\beta_{ij},\gamma_{ij},\delta_{ij}$ is $1$ if the corresponding probability is non-zero and $0$ otherwise. In this instance, $\alpha_{21} = \beta_{21} = \delta_{21} = 1$, $\gamma_{21} = 0$. In particular, $\alpha_{21}+\beta_{21}+\gamma_{21}+\delta_{21} \leq 3$ and we can easily check that this is true for all choices of $i,j$. It is straightforward to extend this property to $\alpha+\beta+\gamma+\delta$. We have \[\alpha = \Sigma_{ij} \Pr(u_i,v_j) \alpha_{ij}\] and so on for the other greek letters. So 
\begin{align*}
&\alpha+\beta+\gamma+\delta \\
&= \Sigma_{ij} \Pr(u_i,v_j) (\alpha_{ij} + \beta_{ij}+\gamma_{ij} +\delta_{ij})  \\
& \leq 3 \left(\Sigma_{ij} \Pr(u_i,v_j) \right)= 3
\end{align*}
\qed

The next step is to find an upper bound to the classical capacity of this channel up to first order in $\epsilon$. It follows from Lemma \ref{lem} a channel with these probabilities cannot send much more than
\begin{align}
\begin{split}
- \epsilon  &\left[ pq \alpha \log( pq \alpha) + p(1-q) \beta \log( p(1-q) \beta) \right. \\
& \left. + (1-p)q \gamma \log((1-p)q \gamma) \right.\\
& \left. + (1-p)(1-q) \delta \log((1-p)(1-q) \delta) \right]
\end{split}
\end{align}
information. Now, we relax the problem. We no longer require that we have a product distribution. Choose $p_i$ as described in Equations ~\ref{sub} and choose $\alpha$, $\beta$, $\gamma$, $\delta$ with $\alpha+\beta+\gamma+\delta \leq 3$ as in Equation ~\ref{eq:0}. The capacity of our channel, by the above lemma, is at most
\begin{align} \label{eq:2}
&- \epsilon (\alpha k_{00} \log \alpha k_{00} + \beta k_{01} \log \beta k_{01}  \nonumber \\
&+ \gamma k_{10} \log \gamma k_{10} + \delta k_{11} \log \delta k_{11} ) +O(\epsilon^2). 
\end{align}
where we have further defined $k_{00} = pq, k_{01} = p(1-q),  k_{10} = (1-p)q, k_{11} = (1-p)(1-q)$ such that $ \Sigma_{i,j} k_{ij} = 1$. 

Our aim now is to find values of $(k_{00}, k_{01},k_{10}, k_{11})$ and $(\alpha,\beta,\gamma,\delta)$ which maximize this expression, which would give us an upper bound on the classical capacity.
 
We can do this in several steps, which we outline below. 

First, we observe that one of $\alpha k_{00}$, $\beta k_{01}$, $\gamma k_{10}$, $\delta k_{11}$, is at most $3/16$, and recall that $\alpha, \beta,\gamma,\delta < 1$. But $f(x) = x \log x$ is maximized when $x = 1/e$. These two facts let us assume that \[\alpha + \beta + \gamma + \delta = 3\] at the point where Equation \eqref{eq:2} is maximized because of the following: suppose $\alpha k_{00} < \frac{3}{16}$ (and therefore < $\frac{1}{e}$). Then if $\alpha+ \beta+ \gamma+ \delta < 3$, we could increase the capacity, Equation \eqref{eq:2}, by increasing $\alpha$, and therefore our original choice could not have maximized the capacity.

Next, we show that at the maximum
$\alpha = 3 k_{00}$, 
$\beta = 3 k_{01}$, 
$\gamma = 3 k_{10}$, 
$\delta = 3 k_{11}$. 

{\bf Proof:}
This eventually falls out from formulating the problem with Lagrange multipliers with the constraints $\alpha + \beta + \gamma + \delta = 3$ and $k_{00}+k_{01}+k_{10}+k_{11} =1$. But we take a quicker tack: we show that we can increase the rate if this is not the case. 

Suppose $\alpha - 3 k_{00} = \delta_1$ and $\beta - 3k_{01} = -\delta_2$. Let $\epsilon$ be $\frac{1}{2} \min(\delta_1,\delta_2)$. We can increase $\alpha k_{00}$ and $\beta k_{01}$ by replacing
\begin{eqnarray*}
\alpha' &=& \alpha - \epsilon \\
\beta' &=& \beta + \epsilon \\
k_{00}' &=& k_{00} + \epsilon/3\\
k_{01}' &=& k_{00} - \epsilon/3
\end{eqnarray*}
which respects the constraints while leaving the other four variables unchanged. Since the probability of getting a faithfully-transmitted output increases with both $\alpha k_{00}$ and $\beta k_{01}$ (recalling how $k_{ij}$ was defined), so should the rate increase. 
\qed

Finally, we need to show that the capacity is maximized when
$k_{00} = k_{01} = k_{10} = k_{11} = 1/4$ and  
$\alpha = \beta = \gamma = \delta = 3/4$.

{\bf Proof:}
Let
\[
f = -3x^2 \log 3 x^2.
\]
We need to find
\begin{align*}
\max_{k_{ij}} \, f(k_{00}) + f(k_{01}) + f(k_{10}) + f(k_{11})  \, \\
s.t. \,\, \Sigma_{i,j} \, k_{ij} = 1
\end{align*}
Define
\begin{align*}
{\mathcal{L}}(k_{ij},\lambda)  &= f(k_{00}) + f(k_{01}) + f(k_{10}) + f(k_{11})   \\
&+ \lambda(k_{00} + k_{01} + k_{10} + k_{11} - 1)
\end{align*}
and we would like to get $\nabla_{ij,\lambda} {\mathcal{L}}= 0,$  so we need $f'(k_{00}) =f'(k_{01}) =f'(k_{10}) =f'(k_{11}) $.
 
Looking at the graph of $f'$ shows that for any $t$, 
there are at most two points $x_1$ and $x_2$ with $0 < x_1 \leq x_2 < 1$ 
where $f'(x_1) = f'(x_2) = t$. This shows that in the maximum of
$ f(k_{00}) + f(k_{01}) + f(k_{10}) + f(k_{11}) $, there are at most two different values
of $k_{ij}$. However, the asymmetry of the graph for the portions where $0<x<1$ and $f'(x)>0$ makes it clear that one cannot assign more than one value to $k_{ij}$in such a way that $\Sigma k_{ij}=1$. 
We may thus conclude that there is only one value of $k_{ij}$ that maximizes this expression, and that must be $k_{ij} = 1/4$. 

Putting the numbers into Equation \eqref{eq:2}, we may conclude that the upper bound on the classical capacity is $1.255\epsilon + O(\epsilon^2)$.  

\bibliography{main}
\bibliographystyle{ieeetr}
\nocite{*}

\end{document}